\documentclass{ifacconf}
\usepackage{graphicx}
\usepackage{amsmath}
\usepackage{amssymb}
\usepackage{mathtools}
\usepackage{amsfonts}
\usepackage{amscd}
\usepackage{bm}
\usepackage{tikz}
\usetikzlibrary{positioning, arrows.meta}
\usepackage{enumitem}

\usepackage{natbib}

\newcommand{\E}{\mathbb{E}}
\newcommand{\N}{\mathcal{N}}
\newcommand{\tr}{\operatorname{Tr}}

\definecolor{sendercolor}{RGB}{31,119,180}   
\definecolor{receivercolor}{RGB}{214,39,40}  

\begin{document}
\begin{frontmatter}

\title{Strategic Gaussian Signaling under Linear Sensitivity Mismatch\thanksref{footnoteinfo}} 

\thanks[footnoteinfo]{This work was funded by the CNRS MITI project BLESS.}

\author[CRAN,UIR]{Hassan Munif}, 
\author[CRAN]{Vineeth S. Varma}, 
\author[CRAN]{Samson Lasaulce}
\address[CRAN]{Universit\'e de Lorraine, CNRS, CRAN, F-54000 Nancy, France.}
\address[UIR]{Universit\'e Internationale de Rabat, TICLab, Rabat, Morocco.}

\begin{abstract}
We analyze Stackelberg Gaussian signaling games where the encoder and decoder have a linear sensitivity mismatch. Unlike the standard additive-bias model, a sensitivity mismatch means the encoder prefers the decoder to track a linear transformation of the state rather than a shifted one. We derive the equilibrium structure for both noiseless (cheap-talk) and noisy signaling channels. In the noiseless case, the equilibrium admits a spectral characterization: the encoder transmits information only along eigenspaces associated with the negative eigenvalues of a mismatch matrix. In the noisy regime, we derive analytical thresholds for informative signaling, showing that communication collapses if the sensitivity mismatch or transmission cost exceeds a channel-dependent threshold.
\end{abstract}

\begin{keyword}
Signaling games, Cheap talk, Bayesian persuasion, Linear bias.
\end{keyword}
\end{frontmatter}
\section{Introduction}
\label{sec:intro}
In decentralized control and cyber-physical systems, information exchange often involves agents with misaligned incentives. Unlike classical communication, which focuses on reliable data reconstruction, \emph{Strategic Information Transmission} (SIT) arises when an informed sender communicates with a receiver whose actions impact both parties' costs. In this setting, divergent objectives create a trade-off between information revelation and strategic manipulation. Characterizing the limits of communication under such misalignment is essential for designing robust decentralized systems, with applications ranging from smart grids \cite{larrousse2014crawford-sobel} and federated learning \citep{sun2024strategic,munif2024strategic} to economic and political interactions \cite{kamenica2019bayesian}.

In the foundational cheap-talk model of \cite{crawford1982strategic}, costless but non-verifiable communication between misaligned agents leads to coarse, quantized information transmission. The Bayesian persuasion framework of \cite{kamenica2011Bayesian} considers a complementary setting where the sender commits to an information disclosure policy, yielding a Stackelberg game structure that enables more expressive communication.

Over the past decade, these ideas have been extended to control, signal processing, and information theory. In these domains, the sender is typically a sensor or encoder, the receiver is an estimator or controller, and both agents have quadratic objectives with Gaussian information structures \cite{akyol2017information-theoretic,saritas2017quadratic,farokhi2017estimation}. This has produced an extensive literature on signaling games, including conditions for linear equilibria (a problem dating back to \cite{witsenhausen1968acounterexample}), dynamic extensions \cite{saritas2020dynamic,sayin2019hierarchical}, privacy-signaling games \cite{akyol2015privacy,kazikli2022quadratic}, and multidimensional geometric characterizations \cite{kazikli2023signaling}. In parallel, the information-design literature has developed semidefinite programming methods and spectral characterizations for quadratic Gaussian persuasion \cite{tamura2018bayesian,sayin2022bayesian,velicheti2025value}, establishing that senders transmit information along state-space directions that serve their objectives and block directions that do not.

Most of this literature models preference misalignment as an \emph{additive bias}, assuming the encoder wants the decoder's action to track a shifted target instead of the true state. Under additive bias, Bayesian persuasion and Stackelberg cheap talk models generally yield fully revealing equilibria. Because the encoder anticipates the decoder's rational response, withholding information provides no strategic advantage \cite{saritas2017quadratic,saritas2020dynamic}. However, misalignment in modern cyber-physical and human-machine systems frequently stems from a difference in \emph{sensitivity} rather than a simple shift in location. Two agents might monitor the same vector-valued state but apply different linear transformations to it. For example, safety protocols or privacy constraints \cite{akyol2015privacy} might make a strategic sensor overly sensitive to specific dimensions. Consequently, the encoder actually prefers the decoder's action to track a scaled or rotated version of the state, not a shifted one.

This paper investigates how such \emph{linear sensitivity mismatch} alters strategic Gaussian signaling. We consider a sender who observes a Gaussian source and communicates over either a noiseless (cheap-talk) or noisy channel to a receiver. The receiver minimizes standard mean-squared estimation error (MMSE), while the sender's objective is centered at a linear transformation of the state. The sender announces her policy first, and the receiver best responds. This sequential structure forms a Stackelberg game. Our main contributions are:

\begin{enumerate}[wide]
    \item We introduce a Gaussian signaling game where misalignment is captured by a general matrix $A$ that transforms the state. This formulation generalizes the standard additive bias models. We show that equilibrium behavior depends on the eigenvalues of the sensitivity mismatch rather than the magnitude of a bias vector.
    
    \item For multidimensional cheap talk, we specialize quadratic Gaussian persuasion results \citep{tamura2018bayesian,sayin2022bayesian} to the linear sensitivity mismatch model. We establish that equilibrium disclosure is governed by the spectral decomposition of a transformed \emph{mismatch} matrix, and we provide an explicit linear encoder that realizes the optimal posterior covariance.
 
    \item Unlike persuasion literature, we analyze the \emph{noisy} channel setting with transmission costs. We derive necessary and sufficient conditions for the existence of informative equilibria in the scalar case, alongside a necessary condition in the vector case with isotropic sensitivity. We show a phase transition where communication collapses if the transmission cost exceeds a threshold determined by the channel capacity and the sensitivity mismatch.
\end{enumerate}

\textbf{Notation.} We denote vectors with bold lower-case letters (e.g., $\boldsymbol{x}$) and matrices with regular uppercase letters (e.g., $A,\Sigma$). Scalars are denoted by lower-case non-bold letters (e.g., $\rho,a$ ). Unless specified otherwise, random variables and their realizations share the same symbols, distinguished by context. The transpose is $(\cdot)^\top$, and $\tr(\cdot)$ is the matrix trace. $I$ and $O$ denote the identity and zero matrices, respectively. For two symmetric matrices $A$ and $B$, $A \succ B$ ($A \succeq B$) indicates that $A-B$ is positive (semi)definite. $\mathcal{N}(\boldsymbol{0}, \Sigma)$ denotes a Gaussian distribution with mean $\boldsymbol{0}$ and covariance $\Sigma$. $\mathbb{E}[\cdot]$ denotes the expectation operator, $\|\cdot\|$ denotes the standard Euclidean norm.

\section{Problem Formulation}
\label{sec:problem}
\subsection{System Model}
We consider a strategic communication system consisting of two decision-makers: an \emph{encoder} (sender), and a \emph{decoder} (receiver). The system operates as follows (see Fig.~\ref{fig:system}):
\begin{enumerate}[wide,label=\textbf{\arabic*.},itemindent=10pt,topsep=-\baselineskip]
    \item \emph{Source:} A random vector $\boldsymbol{m} \in \mathbb{R}^n$ is drawn from a zero-mean Gaussian distribution with covariance $\Sigma_m \succ O$, i.e., $\boldsymbol{m} \sim \mathcal{N}(\boldsymbol{0}, \Sigma_m)$.
    
    \item \emph{Encoding:} The encoder observes the realization $\boldsymbol{m}$ of the source and transmits a signal $\boldsymbol{x} = \gamma^{\mathrm{e}}(\boldsymbol{m}) \in \mathbb{R}^n$. We define the set of admissible encoder policies, $\Gamma^{\mathrm{e}}$, as the set of all deterministic (Borel-measurable) functions $\gamma^{\mathrm{e}}: \mathbb{R}^n \to \mathbb{R}^n$.
    
    \item \emph{Channel:} The signal $\boldsymbol{x}$ passes through an additive Gaussian noise channel $\boldsymbol{y} = \boldsymbol{x} + \boldsymbol{w}$, where $\boldsymbol{w} \sim \mathcal{N}(\boldsymbol{0}, \Sigma_w)$ is independent of $\boldsymbol{m}$. The noiseless (cheap-talk) setting corresponds to $\Sigma_w = O$, yielding $\boldsymbol{y} = \boldsymbol{x}$.
    
    \item \emph{Decoding:} The decoder observes the realization $\boldsymbol{y}$ and produces an estimate $\boldsymbol{u} = \gamma^{\mathrm{d}}(\boldsymbol{y}) \in \mathbb{R}^n$. Similarly, we define the set of admissible decoder policies, $\Gamma^{\mathrm{d}}$, as the set of all deterministic (Borel-measurable) functions $\gamma^{\mathrm{d}}: \mathbb{R}^n \to \mathbb{R}^n$.
\end{enumerate}

\begin{figure}[htbp]
    \centering
    \begin{tikzpicture}[
    >=Latex, 
    node distance=1.3cm, 
    thick,
    block/.style={draw, rectangle, minimum height=3em, minimum width=4.5em, align=center, rounded corners},
    encstyle/.style={block, fill=sendercolor!10, draw=sendercolor, text=sendercolor},
    decstyle/.style={block, fill=receivercolor!10, draw=receivercolor, text=receivercolor},
    sum/.style={draw, circle, inner sep=0pt, minimum size=4mm},
    font=\small
]
    \node (source) {$\boldsymbol{m}$};
    \node[encstyle, right=0.6cm of source] (encoder) {\textbf{Encoder} \\ $\gamma^e$};
    \node[sum, right=1.1cm of encoder] (sum) {+};
    \node[above=0.8cm of sum] (noise) {$\boldsymbol{w} \sim \mathcal{N}(\boldsymbol{0}, \Sigma_w)$};
    \node[decstyle, right=1.1cm of sum] (decoder) {\textbf{Decoder} \\ $\gamma^d$};
    \node[right=0.6cm of decoder] (dest) {$\boldsymbol{u}$};
    \draw[->] (source) -- (encoder);
    \draw[->] (encoder) -- node[above] {$\boldsymbol{x}$} (sum);
    \draw[->] (noise) -- (sum);
    \draw[->] (sum) -- node[above] {$\boldsymbol{y}$} (decoder);
    \draw[->] (decoder) -- (dest);
\end{tikzpicture}
    \caption{System model. The encoder observes the source $\boldsymbol{m}$, transmits $\boldsymbol{x}$ over a noisy channel, and the decoder produces an estimate $\boldsymbol{u}$ of $\boldsymbol{m}$.}
    \label{fig:system}
\end{figure}
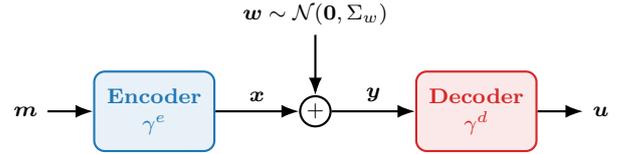
\subsection{Objective Functions with Linear Bias}
The encoder and decoder aim to minimize their respective expected costs. These objectives are \emph{non-aligned}, making the problem a game rather than a joint optimization problem. We define the \emph{instantaneous cost functions} for the decoder and encoder as follows:
\begin{align}
    c^{\mathrm{d}}(\boldsymbol{m}, \boldsymbol{u}) &\coloneqq \|\boldsymbol{m} - \boldsymbol{u}\|^2, \label{eq:inst_cost_d} \\
    c^{\mathrm{e}}(\boldsymbol{m}, \boldsymbol{x}, \boldsymbol{u}) &\coloneqq \|A\boldsymbol{m} - \boldsymbol{b} - \boldsymbol{u}\|^2 + \rho \|\boldsymbol{x}\|^2. \label{eq:inst_cost_e}
\end{align}
Here, $c^{\mathrm{d}}$ denotes the standard squared-error distortion. For the encoder, $c^{\mathrm{e}}$ contains a \emph{linear bias} through a matrix $A \in \mathbb{R}^{n \times n}$, an \emph{additive bias} $\boldsymbol{b} \in \mathbb{R}^n$, and a soft transmission power constraint weighted by $\rho \geq 0$.\\
Each decision maker minimizes the expectation of its cost under the joint distribution induced by the source, channel noise, and the chosen policies. Expected costs are denoted as $J^{\mathrm{d}}(\gamma^{\mathrm{e}}, \gamma^{\mathrm{d}}) \coloneqq \mathbb{E}[c^{\mathrm{d}}(\boldsymbol{m}, \boldsymbol{u})]$ and $J^{\mathrm{e}}(\gamma^{\mathrm{e}}, \gamma^{\mathrm{d}}) \coloneqq \mathbb{E}[c^{\mathrm{e}}(\boldsymbol{m}, \boldsymbol{x}, \boldsymbol{u})]$ for the decoder and encoder, respectively.

\begin{rem}
The matrix $A$ captures the \emph{sensitivity mismatch} between the encoder and decoder. If $A = I$, this setup recovers the classical model of \cite{saritas2017quadratic} with additive bias $\boldsymbol{b}$.
\end{rem}
\begin{rem}[Cheap talk vs.\ Signaling]
When $\rho = 0$ and the channel is noiseless, the transmitted signal $\boldsymbol{x}$ has no direct cost and serves purely as a message. This setting is referred to as Stackelberg \emph{cheap talk} \citep{crawford1982strategic} (with commitment), effectively making the problem one of \emph{Bayesian Persuasion} \citep{kamenica2011Bayesian}. When $\rho > 0$ or the channel is noisy, the problem becomes a \emph{signaling game}, where the encoder faces a trade-off between communication fidelity and transmission cost.
\end{rem}
\subsection{Stackelberg Equilibrium}
We study the \emph{Stackelberg game} where the encoder acts as the \emph{leader} and the decoder as the \emph{follower}. This models scenarios where the encoder's policy is designed and committed to before the decoder optimizes its response.\\
\begin{defn}[Stackelberg Equilibrium]
A pair of policies constitutes a Stackelberg Equilibrium (SE) with the encoder as the leader if the encoder commits to a policy $\gamma^{\mathrm{e}}$ and the decoder plays the optimal best response. The policy $\gamma^{*,e}$ satisfies
\begin{equation*}
    J^{\mathrm{e}}(\gamma^{*,e}, \gamma^{*,d}(\gamma^{*,e})) \leq J^{\mathrm{e}}(\gamma^{\mathrm{e}}, \gamma^{*,d}(\gamma^{\mathrm{e}})) \quad \forall \gamma^{\mathrm{e}} \in \Gamma^{\mathrm{e}},
\end{equation*}
where $\gamma^{*,d}(\gamma^{\mathrm{e}})$ is the decoder's best response to $\gamma^{\mathrm{e}}$:
\begin{equation*}
    \gamma^{*,d}(\gamma^{\mathrm{e}})= \arg\min_{\gamma^{\mathrm{d}} \in \Gamma^{\mathrm{d}}} J^{\mathrm{d}}(\gamma^{\mathrm{e}}, \gamma^{\mathrm{d}}).
\end{equation*}
\end{defn}

\begin{prop}[Decoder's best response]
\label{prop:decoder_mmse}
For any fixed encoder policy $\gamma^{\mathrm{e}}$, the unique optimal decoder strategy is the conditional expectation of the source given the observation: $\gamma^{*,d}(\boldsymbol{y}) = \mathbb{E}[\boldsymbol{m} \mid \boldsymbol{y}]$.
\end{prop}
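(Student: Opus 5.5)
The plan is the standard orthogonality (Hilbert-space projection) argument for mean-squared error. Fix $\gamma^{\mathrm{e}}\in\Gamma^{\mathrm{e}}$. Then $\boldsymbol{y}=\gamma^{\mathrm{e}}(\boldsymbol{m})+\boldsymbol{w}$ is a well-defined random vector, since $\gamma^{\mathrm{e}}$ is Borel and $\boldsymbol{w}$ is independent of $\boldsymbol{m}$. The decoder's problem reduces to minimizing $\E\|\boldsymbol{m}-\gamma^{\mathrm{d}}(\boldsymbol{y})\|^2$ over Borel maps $\gamma^{\mathrm{d}}$; the encoder's cost plays no role here. Because $\boldsymbol{m}$ is Gaussian, $\E\|\boldsymbol{m}\|^2=\tr(\Sigma_m)<\infty$. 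Hence the conditional expectation $\hat{\boldsymbol{m}}(\boldsymbol{y})\coloneqq\E[\boldsymbol{m}\mid\boldsymbol{y}]$ exists, is square integrable, and by the Doob--Dynkin lemma can be written as a Borel function of $\boldsymbol{y}$. It is therefore an admissible element of $\Gamma^{\mathrm{d}}$.

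The core step is the decomposition. For any $\gamma^{\mathrm{d}}$ with $\E\|\gamma^{\mathrm{d}}(\boldsymbol{y})\|^2<\infty$, write $\boldsymbol{m}-\gamma^{\mathrm{d}}(\boldsymbol{y})=(\boldsymbol{m}-\hat{\boldsymbol{m}})+(\hat{\boldsymbol{m}}-\gamma^{\mathrm{d}}(\boldsymbol{y}))$ and expand the square. The cross term
\[
\E\big[(\boldsymbol{m}-\hat{\boldsymbol{m}})^\top(\hat{\boldsymbol{m}}-\gamma^{\mathrm{d}}(\boldsymbol{y}))\big]
\]
vanishes by the tower property: conditioning on $\boldsymbol{y}$ makes the second factor measurable, and $\E[\boldsymbol{m}-\hat{\boldsymbol{m}}\mid\boldsymbol{y}]=\boldsymbol{0}$. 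Integrability of the product follows from Cauchy--Schwarz. This yields
\[
J^{\mathrm{d}}(\gamma^{\mathrm{e}},\gamma^{\mathrm{d}})=\E\|\boldsymbol{m}-\hat{\boldsymbol{m}}\|^2+\E\|\hat{\boldsymbol{m}}-\gamma^{\mathrm{d}}(\boldsymbol{y})\|^2,
\]
so $\hat{\boldsymbol{m}}$ is optimal. Any decoder with $\E\|\gamma^{\mathrm{d}}(\boldsymbol{y})\|^2=\infty$ is handled separately: it incurs infinite cost, since $\|\boldsymbol{m}-\boldsymbol{u}\|^2\ge\tfrac12\|\boldsymbol{u}\|^2-\|\boldsymbol{m}\|^2$. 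Such decoders are therefore strictly worse and can be discarded.

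Uniqueness follows from the same identity. Equality of costs forces $\E\|\hat{\boldsymbol{m}}-\gamma^{\mathrm{d}}(\boldsymbol{y})\|^2=0$, that is, $\gamma^{\mathrm{d}}(\boldsymbol{y})=\E[\boldsymbol{m}\mid\boldsymbol{y}]$ almost surely. The main obstacle, mild as it is, is phrasing uniqueness correctly. Uniqueness can only hold up to modification on sets of zero probability under the law of $\boldsymbol{y}$; in the noiseless case $\Sigma_w=O$ this law may be concentrated on the range of $\gamma^{\mathrm{e}}$. I would state explicitly that ``unique'' is meant in this almost-sure sense.
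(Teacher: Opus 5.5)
Your proposal is correct and follows the same orthogonality decomposition as the paper: expand $\mathbb{E}\|\boldsymbol{m}-\boldsymbol{u}\|^2$ around $\mathbb{E}[\boldsymbol{m}\mid\boldsymbol{y}]$, kill the cross term with the tower property, and read off almost-sure uniqueness from the remaining nonnegative term. The extra care you add fills in details the paper leaves implicit without changing the argument: square integrability of $\boldsymbol{m}$, Doob--Dynkin measurability of the conditional mean, disposal of decoders with infinite second moment, and the almost-sure meaning of ``unique''.
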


\begin{pf}
Let $\widehat{\boldsymbol{m}}(\boldsymbol{y}) \coloneqq\mathbb{E}[\boldsymbol{m} \mid \boldsymbol{y}]$ and let $\boldsymbol{u}(\boldsymbol{y})$ be any arbitrary estimator. We decompose the mean squared error:
\begin{align*}
    \mathbb{E}[\|\boldsymbol{m} - \boldsymbol{u}\|^2] &= \mathbb{E}[\|(\boldsymbol{m} - \widehat{\boldsymbol{m}}) + (\widehat{\boldsymbol{m}} - \boldsymbol{u})\|^2] \\
    &= \mathbb{E}[\|\boldsymbol{m} - \widehat{\boldsymbol{m}}\|^2] + \mathbb{E}[\|\widehat{\boldsymbol{m}} - \boldsymbol{u}\|^2] \\
    &\quad + 2\mathbb{E}[(\widehat{\boldsymbol{m}} - \boldsymbol{u})^\top (\boldsymbol{m} - \widehat{\boldsymbol{m}})].
\end{align*}
By the law of iterated expectations, we have
\begin{align*}
    \mathbb{E}\big[(\widehat{\boldsymbol{m}} - \boldsymbol{u})^\top (\boldsymbol{m}-\widehat{\boldsymbol{m}})\big]
    &= \mathbb{E}\Big[\mathbb{E}[(\widehat{\boldsymbol{m}} - \boldsymbol{u})^\top(\boldsymbol{m}-\widehat{\boldsymbol{m}})\mid \boldsymbol{y}]\Big]=0.
\end{align*}
Thus, the cost is minimized if and only if $\mathbb{E}[\|\widehat{\boldsymbol{m}} - \boldsymbol{u}\|^2] = 0$, which implies $\boldsymbol{u}(\boldsymbol{y}) = \widehat{\boldsymbol{m}}(\boldsymbol{y})$ almost surely.
\hfill $\qed$
\end{pf}

We define the following equilibrium outcomes.
\begin{defn}
    We say that an SE is \emph{informative} if the encoder reveals information related to the source, i.e., the source $\boldsymbol{m}$ and the message $\boldsymbol{x}$ are not independent random variables. An SE is \emph{non-informative} if the signal provided by the encoder does not alter the decoder's belief relative to the prior mean, resulting in $\boldsymbol{u} = \mathbb{E}[\boldsymbol{m}]$ almost surely. Furthermore, an SE is \emph{fully revealing} if the signal $\boldsymbol{x}$ allows the decoder to perfectly reconstruct the source (i.e., $\boldsymbol{u}=\boldsymbol{m}$ almost surely).
\end{defn}

\section{Noiseless Case: Bayesian Persuasion}
\subsection{Scalar Case}
\label{sec:cheaptalk_scalar}

We begin by analyzing the scalar instance ($n=1$) under the cheap-talk assumption with no noise ($w=0,\rho = 0$). In this setting, the signal is $y = x$. By Prop.~\ref{prop:decoder_mmse}, for any encoding policy $\gamma^{\mathrm{e}}$, the decoder's unique optimal response is $u = \mathbb{E}[m \mid x]$.

The encoder, anticipating this response, seeks a policy $\gamma^{\mathrm{e}}$ that minimizes $\mathbb{E}[(am - b - u)^2]$.
\begin{thm}[Stackelberg informative threshold]
The nature of the SE is determined solely by the multiplicative bias parameter $a$:
\begin{enumerate}[label=\textbf{\arabic*.},topsep=-0.5\baselineskip]
    \item If $a > 1/2$, every SE is fully revealing.
    \item If $a < 1/2$, every SE is non-informative (babbling).
    \item If $a = 1/2$, the encoder is indifferent and every policy is an SE.
\end{enumerate}
\end{thm}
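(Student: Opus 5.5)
Substituting the decoder's best response $u=\E[m\mid x]$ from Prop.~\ref{prop:decoder_mmse}, I will rewrite the encoder's cost as a function of a single scalar: the variance of the posterior mean, $v \coloneqq \E[u^2]$, which can range over $[0,\sigma_m^2]$ (with $\sigma_m^2=\Sigma_m$ in the scalar case). Expanding,
\begin{equation*}
\E[(am-b-u)^2] = a^2\sigma_m^2 + b^2 + \E[u^2] - 2a\,\E[mu] + 2b\,\E[u] - 2ab\,\E[m].
\end{equation*}
The terms involving $b$ are fixed: $\E[m]=0$, and the tower property gives $\E[u]=\E[m]=0$, so they reduce to the constant $b^2$. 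The tower property also yields $\E[mu]=\E[\E[m\mid x]\,u]=\E[u^2]$. Hence
\begin{equation*}
J^{\mathrm e}=a^2\sigma_m^2+b^2+(1-2a)\,\E[u^2].
\end{equation*}
This identity explains why the additive bias $b$ drops out and why the threshold sits exactly at $a=1/2$.

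Next I will characterize $v=\E[u^2]$ in terms of the encoder policy. By the MMSE decomposition, $\E[u^2]=\sigma_m^2-\E[(m-u)^2]$, so $0\le v\le\sigma_m^2$. The upper bound $v=\sigma_m^2$ holds exactly when $\E[(m-u)^2]=0$, i.e.\ when $u=m$ a.s.\ (fully revealing). The lower bound $v=0$ holds exactly when $u=0=\E[m]$ a.s.\ (non-informative). Both extremes are attained by admissible deterministic policies: $\gamma^{\mathrm e}(m)=m$ gives $v=\sigma_m^2$, and a constant policy gives $v=0$.

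With this, the three cases follow by minimizing a linear function of $v$ over $[0,\sigma_m^2]$. If $a>1/2$, the coefficient $1-2a$ is negative, so the minimizers are exactly the policies with $v=\sigma_m^2$; these are precisely the fully revealing ones. If $a<1/2$, the minimizers are exactly the policies with $v=0$, which are the non-informative ones. If $a=1/2$, $J^{\mathrm e}$ does not depend on the policy, so every policy attains the minimum and is an SE.

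The step I expect to need the most care is the exact identification of the extremes, in particular the claim that \emph{every} optimal policy is fully revealing or babbling. For this I will rely on the ``if and only if'' form of the MMSE statement: $\E[(m-u)^2]=0$ forces $u=m$ a.s., and $\E[u^2]=0$ forces $u=0$ a.s. I will also note that the definition of non-informative in terms of $u=\E[m]$ a.s.\ is the one that matches this characterization, as opposed to independence of $x$ and $m$, which is a stronger requirement.
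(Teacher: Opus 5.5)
Your proposal is correct and follows essentially the same route as the paper. Both proofs reduce $J^{\mathrm e}$ to $(1-2a)\E[u^2]+\mathrm{const}$ via the tower property (your identity $\E[mu]=\E[u^2]$ is the same fact as the paper's decomposition $am-u=(a-1)u+a(m-u)$ with $\E[u(m-u)]=0$), and then minimize a linear function of $\E[u^2]$ over $[0,\sigma_m^2]$. Two of your additions make the ``every SE'' claims more rigorous than the paper's version, which only exhibits minimizers (injective and constant maps) without showing that every minimizer is of the stated type: your explicit ``if and only if'' identification of the two extremes, and your remark that non-informativeness must be read as $u=\E[m]$ a.s. rather than as independence of $x$ and $m$ (for instance, $x=|m|$ is optimal when $a<1/2$, yet $x$ depends on $m$).
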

\begin{pf}
The game proceeds sequentially. Given the decoder's optimal response $u = \mathbb{E}[m \mid x]$ (Prop.~\ref{prop:decoder_mmse}), the encoder selects $\gamma^{\mathrm{e}}$ to minimize $
 J^{\mathrm{e}} = \mathbb{E} \left[ (am - b - u)^2 \right]$.
Expanding the quadratic term: $J^{\mathrm{e}}= \mathbb{E} \left[ (am - u)^2 \right] - 2b\mathbb{E}[am - u] + b^2$.
We analyze the linear and quadratic terms separately. By the law of iterated expectations, $\mathbb{E}[u] = \mathbb{E}[\mathbb{E}[m|x]] = \mathbb{E}[m]$. Thus, the linear term becomes $-2b(a\mathbb{E}[m] - \mathbb{E}[u]) = -2b(a-1)\mathbb{E}[m]$.
This term depends only on the prior statistics of the source and the constants $a$ and $b$; it is independent of the encoder's policy $\gamma^{\mathrm{e}}$.

For the quadratic term, we decompose $am - u = (a-1)u + a(m - u)$. By the orthogonality principle, the error $m - u$ is orthogonal to the estimator $u$. Thus, $\mathbb{E}[u(m - u)] = 0$, and variance decomposition gives $\mathbb{E}[(m - u)^2] = \mathbb{E}[m^2] - \mathbb{E}[u^2]$. Expanding the expectation yields:
\begin{align*}
\mathbb{E}[ (am - u)^2 ] &= (a-1)^2\mathbb{E}[u^2] + a^2\mathbb{E}[(m - u)^2] \\
&= \left((a-1)^2 - a^2\right)\mathbb{E}[u^2] + a^2\mathbb{E}[m^2] \\
&= (1-2a)\mathbb{E}[u^2] + a^2\mathbb{E}[m^2].
\end{align*}
The total cost for the encoder is therefore $J^{\mathrm{e}} = (1-2a)\mathbb{E}[u^2] + \mathrm{cte}$, where $\mathrm{cte} = a^2\mathbb{E}[m^2] - 2b(a-1)\mathbb{E}[m] + b^2$ is a constant independent of $\gamma^{\mathrm{e}}$. The minimization problem reduces to minimizing $(1-2a)\mathbb{E}[u^2]$. To minimize the cost, the encoder must choose $\mathbb{E}[u^2]$ subject to $(\mathbb{E}[m])^2 \leq \mathbb{E}[u^2] \leq \mathbb{E}[m^2]$. We have the following cases:
\begin{enumerate}[wide,topsep=-1\baselineskip,label=\textbf{\arabic*.}]
\item If $a > 1/2$, the coefficient $(1-2a)$ is negative. The encoder minimizes cost by maximizing $\mathbb{E}[u^2]$. The upper bound $\mathbb{E}[u^2] = \mathbb{E}[m^2]$ is achievable by any injective policy, such as the identity map $\gamma^{\mathrm{e}}(m) = m$. In this case, $u = m$ almost surely, resulting in a fully revealing equilibrium.
\item If $a < 1/2$, then $(1-2a)>0$. The encoder minimizes cost by minimizing $\mathbb{E}[u^2]$. The lower bound $\mathbb{E}[u^2] = (\mathbb{E}[m])^2$ is achievable by any constant policy, such as $\gamma^{\mathrm{e}}(m) = 0$. In this case, the signal carries no information ($u = \mathbb{E}[m]$). The equilibrium is non-informative.
\item If $a = 1/2$, the coefficient is zero. The cost is independent of the policy $\gamma^{\mathrm{e}}$, meaning any admissible policy constitutes a Stackelberg equilibrium. \hfill $\qed$
\end{enumerate}
\end{pf}
\begin{rem}
The $a=1/2$ threshold reflects the point where the benefit of inducing variance in the decoder's estimate perfectly offsets the estimation error penalty. This boundary relies on the decoder's quadratic loss; minimizing a non-quadratic metric (e.g., absolute error) would alter the optimal response, breaking the orthogonality condition $\mathbb{E}[u(m-u)]=0$ and shifting the threshold.
\end{rem}

\subsection{Multidimensional Case}
\label{sec:cheaptalk_multidim}
We now consider the multidimensional cheap-talk game described in Section~\ref{sec:problem}, where the source is $\boldsymbol{m} \sim \mathcal{N}(\boldsymbol{0}, \Sigma_m)$ with $\Sigma_m \succ O$, and the encoder has a bias matrix $A$. In this cheap-talk setting, we have no noise ($\Sigma_w = O$), and signaling is cost-free ($\rho = 0$).\\
For any fixed encoder policy $\gamma^{\mathrm{e}}$, the decoder observes $\boldsymbol{x} = \gamma^{\mathrm{e}}(\boldsymbol{m})$. The decoder's objective is to minimize the mean squared error; the unique optimal best response (by Prop.~\ref{prop:decoder_mmse}): $\gamma^{*,d}(\boldsymbol{x}) = \mathbb{E}[\boldsymbol{m} \mid \boldsymbol{x} = \gamma^{\mathrm{e}}(\boldsymbol{m})]$.

Let $\boldsymbol{u} = \gamma^{*,d}(\boldsymbol{x})$ denote the decoder's estimate. By the properties of conditional expectation, we can decompose the source into the estimate and the error $\boldsymbol{m} - \boldsymbol{u}$. The error is orthogonal to the estimate, i.e., $\mathbb{E}[\boldsymbol{u}(\boldsymbol{m} - \boldsymbol{u})^\top] = O$. Consequently, the covariance of the source decomposes as $\Sigma_m = \Sigma_u + \Sigma_{e}$, where $\Sigma_u = \mathbb{E}[\boldsymbol{u}\boldsymbol{u}^\top]$ and $\Sigma_{e} = \mathbb{E}[(\boldsymbol{m} - \boldsymbol{u})(\boldsymbol{m} - \boldsymbol{u})^\top]$. Since covariance matrices are positive semidefinite, any achievable posterior mean covariance $\Sigma_u$ must satisfy the constraint $\Sigma_m \succeq \Sigma_u \succeq O$.

Expanding the norm and noting that $\mathbb{E}[\boldsymbol{m}] = \mathbb{E}[\boldsymbol{u}]$ (by the law of iterated expectations), the linear terms involving $\boldsymbol{b}$ reduce to constants independent of the policy. The encoder's problem reduces to minimizing the quadratic term: $\mathbb{E}[\|A\boldsymbol{m} - \boldsymbol{u}\|^2] = \mathbb{E}[\|(A-I)\boldsymbol{u} + A(\boldsymbol{m} - \boldsymbol{u})\|^2].$

Using the orthogonality $\mathbb{E}[\boldsymbol{u}(\boldsymbol{m} - \boldsymbol{u})^\top] = O$ and introducing the trace operator (recall that $\mathbb{E}[\|\boldsymbol{z}\|^2] = \tr(\mathbb{E}[\boldsymbol{z}\boldsymbol{z}^\top])$), we rewrite the cost as
\begin{align*}
     &\tr\left((A-I)^\top(A-I) \Sigma_u\right) + \tr\left(A^\top A \Sigma_{e}\right) \nonumber \\
      &= \tr\left((A-I)^\top(A-I) \Sigma_u\right) + \tr\left(A^\top A (\Sigma_m - \Sigma_u)\right).
\end{align*}
Grouping terms dependent on $\Sigma_u$, the encoder's objective becomes:
\begin{equation}
\label{eq:sender_problem_ct_multidim}
    J^{\mathrm{e}}(\Sigma_u) = \tr(V \Sigma_u) + \mathrm{const},
\end{equation}
where $\mathrm{const}=\tr\left(A^\top A \Sigma_m\right)$, and $V \coloneq (A-I)^\top(A-I) - A^\top A = I - (A + A^\top)$ is the cost kernel. The optimization is over the set of achievable posterior mean covariances $\Sigma_u$, which must satisfy $\Sigma_m \succeq \Sigma_u \succeq O$.\\
Directly optimizing over $\Sigma_u$ is difficult due to the generalized inequality constraint $\Sigma_m \succeq \Sigma_u$. To solve this, we utilize a transformation to standardize the constraint, adapted from \cite{velicheti2023strategic}.

\begin{lem}[Equivalent SDP formulation]
    \label{lem:sdp_equivalence}
Let $\Sigma_m \succ O$. The optimization problem: $\min_{\Sigma_u} \tr(V \Sigma_u)$ s.t. $\Sigma_m \succeq \Sigma_u \succeq O$, can be equivalently written as:
\begin{equation}
\label{eq:transformed_prob}
    \min_{\Pi \in \mathbb{S}^n} \quad \tr(B \Pi) \quad \text{s.t.} \quad I \succeq \Pi \succeq O,
\end{equation}
where $B = \Sigma_m^{\frac{1}{2}} V \Sigma_m^{\frac{1}{2}}$, $\mathbb{S}^n$ denotes the set of positive semi-definite matrices of dimension $n\times n$, and the optimization variable is transformed via $\Sigma_u = \Sigma_m^{\frac{1}{2}} \Pi \Sigma_m^{\frac{1}{2}}$.
\end{lem}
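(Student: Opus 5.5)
The plan is to show that the congruence map $\Pi \mapsto \Sigma_m^{\frac{1}{2}} \Pi \Sigma_m^{\frac{1}{2}}$ is a bijection between the two feasible sets that preserves the objective value. Optimal values and optimizers then correspond one-to-one. Since $\Sigma_m \succ O$, its symmetric square root $\Sigma_m^{\frac{1}{2}}$ exists, is symmetric positive definite, and is invertible with inverse $\Sigma_m^{-\frac{1}{2}}$. Define
\[
\phi(\Pi) \coloneqq \Sigma_m^{\frac{1}{2}} \Pi \Sigma_m^{\frac{1}{2}},
\qquad
\phi^{-1}(\Sigma_u) = \Sigma_m^{-\frac{1}{2}} \Sigma_u \Sigma_m^{-\frac{1}{2}}.
\]
Both maps send symmetric matrices to symmetric matrices, and they are mutual inverses on the space of symmetric $n\times n$ matrices.

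The key step is that congruence by an invertible matrix preserves the Loewner order in both directions. For symmetric $X$ and invertible $S$, $X \succeq O$ holds if and only if $S^\top X S \succeq O$. This follows because $\boldsymbol{z}^\top S^\top X S \boldsymbol{z} = (S\boldsymbol{z})^\top X (S\boldsymbol{z})$, and $S\boldsymbol{z}$ ranges over all of $\mathbb{R}^n$. I would apply this with $S = \Sigma_m^{\frac{1}{2}}$ to the two differences separately:
\begin{itemize}
\item From $\Sigma_u - O = \phi(\Pi)$ we get $\Sigma_u \succeq O \iff \Pi \succeq O$.
\item From $\Sigma_m - \Sigma_u = \Sigma_m^{\frac{1}{2}}(I - \Pi)\Sigma_m^{\frac{1}{2}}$ we get $\Sigma_m \succeq \Sigma_u \iff I \succeq \Pi$.
\end{itemize}
Hence $\phi$ maps the feasible set $\{I \succeq \Pi \succeq O\}$ bijectively onto $\{\Sigma_m \succeq \Sigma_u \succeq O\}$. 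Note that $\Pi \succeq O$ is already implied by the constraint, so writing $\Pi \in \mathbb{S}^n$ is consistent.

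For the objective, cyclicity of the trace gives
\[
\tr(V\Sigma_u) = \tr\big(V \Sigma_m^{\frac{1}{2}} \Pi \Sigma_m^{\frac{1}{2}}\big) = \tr\big(\Sigma_m^{\frac{1}{2}} V \Sigma_m^{\frac{1}{2}} \Pi\big) = \tr(B\Pi).
\]
So the two objectives agree along the bijection. The infima are therefore equal, attained (both feasible sets are compact), and $\Pi^\star$ is optimal for \eqref{eq:transformed_prob} if and only if $\phi(\Pi^\star)$ is optimal for the original problem.

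No step is a genuine obstacle. The only point needing care is the "if and only if" in the order-preservation step, which is where invertibility of $\Sigma_m^{\frac{1}{2}}$, i.e. $\Sigma_m \succ O$, is used. Without it the map would not be surjective onto the original feasible set.
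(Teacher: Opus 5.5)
Your proposal is correct and follows the paper's route: the congruence change of variable $\Pi = \Sigma_m^{-\frac{1}{2}}\Sigma_u\Sigma_m^{-\frac{1}{2}}$, cyclicity of the trace to turn $\tr(V\Sigma_u)$ into $\tr(B\Pi)$, and the fact that congruence by an invertible matrix preserves the Loewner order. You are more explicit than the paper about the bijection of feasible sets (via $\Sigma_m - \Sigma_u = \Sigma_m^{\frac{1}{2}}(I-\Pi)\Sigma_m^{\frac{1}{2}}$) and about attainment by compactness, but the argument is the same.
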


\begin{pf}
Since $\Sigma_m \succ O$, the matrix $\Sigma_m^{\frac{1}{2}}$ exists and is invertible. We introduce the change of variable $\Pi = \Sigma_m^{-\frac{1}{2}} \Sigma_u \Sigma_m^{-\frac{1}{2}}$. First, we transform the objective function using the definition $\Sigma_u = \Sigma_m^{\frac{1}{2}} \Pi \Sigma_m^{\frac{1}{2}}$ and the cyclic property of the trace operator\footnote{i.e., $\tr(XYZ) = \tr(ZXY)$}:
\begin{align*}
    \tr(V \Sigma_u) &= \tr(V \Sigma_m^{\frac{1}{2}} \Pi \Sigma_m^{\frac{1}{2}}) \\
    &= \tr(\Sigma_m^{\frac{1}{2}} V \Sigma_m^{\frac{1}{2}} \Pi) = \tr(B \Pi).
\end{align*}
Second, we transform the constraints. Recall that for any invertible matrix $M$, $A \succeq B$ if and only if $M A M^\top \succeq M B M^\top$. We apply this with $M = \Sigma_m^{-\frac{1}{2}}$:
\begin{align*}
    \Sigma_m \succeq \Sigma_u \succeq O 
    &\iff \Sigma_m^{-\frac{1}{2}} \Sigma_m \Sigma_m^{-\frac{1}{2}} \succeq \Sigma_m^{-\frac{1}{2}} \Sigma_u \Sigma_m^{-\frac{1}{2}} \\
    &\qquad\qquad \succeq \Sigma_m^{-\frac{1}{2}} O \Sigma_m^{-\frac{1}{2}} \\
    &\iff I \succeq \Pi \succeq O.
\end{align*}
Thus, the problem is equivalent to minimizing $\tr(B \Pi)$ subject to $I \succeq \Pi \succeq O$.
\hfill $\qed$
\end{pf}

\begin{rem}[Relation to quadratic persuasion]
The sender's problem in \eqref{eq:sender_problem_ct_multidim} coincides with the quadratic persuasion formulations in \citep{tamura2018bayesian,sayin2022bayesian}, where the sender's cost is linear in $\Sigma_u$ and the feasible set is $\{0 \preceq \Sigma_u \preceq \Sigma_m\}$. In our setting, the linear functional $V$ is induced by the sensitivity mismatch matrix $A$, and Prop.~\ref{prop:SE_policy} below shows how to construct an explicit linear encoder achieving the optimal $\Sigma_u$, thereby instantiating the persuasion solution in a Gaussian SIT context.
\end{rem}
Applying Lemma~\ref{lem:sdp_equivalence} with the sender's cost kernel $V = I - (A+A^\top)$, we define the weight matrix $B = \Sigma_m^{\frac{1}{2}} V \Sigma_m^{\frac{1}{2}}$. The solution to the transformed problem \eqref{eq:transformed_prob} allows us to characterize the SE as follows.
\begin{thm}[SE information structure]
    \label{thm:SE_covariance}
    Let $\beta_1 \leq \beta_2 \leq \dots \leq \beta_n$ be the ordered eigenvalues of $B = \Sigma_m^{\frac{1}{2}} V \Sigma_m^{\frac{1}{2}}$, and let $\boldsymbol{q}_1, \dots, \boldsymbol{q}_n$ be the corresponding orthonormal eigenvectors. Let $k\in\mathbb{N}$ denote the number of strictly negative eigenvalues (i.e., $\beta_k < 0$ and $\beta_{k+1} \geq 0$).\\
    Among the set of optimal policies, let us select the solution minimizing the rank of $\Sigma_u$.\footnote{If any eigenvalues are exactly zero, the equilibrium is not unique. Similar to \cite{velicheti2023strategic,sayin2022bayesian}, we effectively select the solution with the minimum rank (least informative) among the set of optimal policies.}The resulting unique SE posterior mean covariance is given by
    \begin{equation}
        \Sigma_u^* = \Sigma_m^{\frac{1}{2}} \Pi^* \Sigma_m^{\frac{1}{2}},
    \end{equation}
    where $\Pi^* = \sum_{i=1}^k \boldsymbol{q}_i \boldsymbol{q}_i^\top$ is the projection onto the subspace spanned by the eigenvectors associated with the strictly negative eigenvalues of $B$. In particular:
\begin{enumerate}[label=\textbf{\arabic*.},topsep=-0.5\baselineskip]
        \item If $k=0$ (all $\beta_i \geq 0$), the equilibrium is non-informative ($\Sigma_u^* = O$).
        \item If $k=n$ (all $\beta_i < 0$), the equilibrium is fully revealing ($\Sigma_u^* = \Sigma_m$).
        \item If $0 < k < n$, the equilibrium is partially revealing.
\end{enumerate}
\end{thm}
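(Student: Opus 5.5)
By Lemma~\ref{lem:sdp_equivalence}, it suffices to solve the transformed problem \eqref{eq:transformed_prob}, $\min \tr(B\Pi)$ subject to $I \succeq \Pi \succeq O$, and then map back through $\Sigma_u^* = \Sigma_m^{\frac{1}{2}}\Pi^*\Sigma_m^{\frac{1}{2}}$. The plan has four steps: diagonalize $B$, obtain a lower bound on the objective, identify the minimizers and the minimum-rank one among them, and read off the three cases. Achievability of the resulting $\Sigma_u^*$ by an actual encoder is left to Prop.~\ref{prop:SE_policy}. Here we argue only at the level of the covariance relaxation, whose feasible set contains every achievable $\Sigma_u$.

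\textbf{Diagonalization and lower bound.} Since $V = I-(A+A^\top)$ is symmetric, so is $B$. Write its spectral decomposition $B = Q\,\mathrm{diag}(\beta_1,\dots,\beta_n)\,Q^\top$ with $Q=[\boldsymbol{q}_1,\dots,\boldsymbol{q}_n]$ orthogonal, and set $\tilde\Pi = Q^\top \Pi Q$. Conjugating by the orthogonal matrix $Q$ preserves the constraint, so $I\succeq\tilde\Pi\succeq O$, and
\begin{equation*}
\tr(B\Pi)=\sum_{i=1}^n \beta_i \tilde\Pi_{ii}, \qquad \tilde\Pi_{ii}=\boldsymbol{q}_i^\top\Pi\boldsymbol{q}_i\in[0,1].
\end{equation*}
It follows that $\tr(B\Pi)\geq \sum_{i\le k}\beta_i$. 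Equality forces $\tilde\Pi_{ii}=1$ for every $i\le k$, and $\tilde\Pi_{ii}=0$ for every $i$ with $\beta_i>0$. The projection $\Pi^*=\sum_{i\le k}\boldsymbol{q}_i\boldsymbol{q}_i^\top$ is feasible and attains the bound, so it is optimal.

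\textbf{Characterizing all optimizers.} This is the step needing care: we must show every optimizer contains $\Pi^*$ and differs from it only on the zero eigenspace. The key fact is that if a PSD matrix $P$ has $P_{ii}=0$, then its whole $i$-th row and column vanish, since every $2\times 2$ principal minor is nonnegative. Apply this to $\tilde\Pi$ on the indices with $\beta_i>0$. Apply it also to $I-\tilde\Pi\succeq O$ on the indices $i\le k$, where $(I-\tilde\Pi)_{ii}=0$. Together these show that any optimal $\tilde\Pi$ is block diagonal, of the form $\mathrm{diag}(I_k, P_0, O)$, where $P_0$ is an arbitrary matrix with $O\preceq P_0\preceq I$ on the zero-eigenvalue block. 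The rank is then $k+\mathrm{rank}(P_0)$, which is minimized uniquely at $P_0=O$, giving $\Pi^*$. This settles uniqueness under the minimum-rank selection. When no $\beta_i$ equals zero, $\Pi^*$ is in fact the unique optimizer without any selection.

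\textbf{The three cases.} Map back via $\Sigma_u^*=\Sigma_m^{\frac{1}{2}}\Pi^*\Sigma_m^{\frac{1}{2}}$, which has rank $k$ because $\Sigma_m^{\frac{1}{2}}$ is invertible.
\begin{itemize}
\item If $k=0$, then $\Pi^*=O$ and $\Sigma_u^*=O$. Since $\boldsymbol{u}$ has zero mean, this gives $\boldsymbol{u}=\boldsymbol{0}=\mathbb{E}[\boldsymbol{m}]$ almost surely, so the equilibrium is non-informative.
\item If $k=n$, then $\Pi^*=QQ^\top=I$ and $\Sigma_u^*=\Sigma_m$. Hence $\Sigma_e=O$ and $\boldsymbol{u}=\boldsymbol{m}$ almost surely, so the equilibrium is fully revealing.
\item If $0<k<n$, then $\Sigma_u^*$ is nonzero and has rank $k<n$. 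The estimate therefore carries information, but $\Sigma_e=\Sigma_m^{\frac{1}{2}}(I-\Pi^*)\Sigma_m^{\frac{1}{2}}\neq O$, so the equilibrium is partially revealing.
\end{itemize}
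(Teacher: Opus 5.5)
Your proof is correct. It shares the paper's skeleton: Lemma~\ref{lem:sdp_equivalence}, diagonalizing $B$, and conjugating $\Pi$ by $Q$. The key step, however, is argued differently.

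\textbf{The paper's route.} After reducing to $\min \tr(\Lambda F)$ over $O\preceq F\preceq I$, the paper invokes \cite[Thm~1]{tamura2018bayesian}. This lets it restrict, without loss of generality, to diagonal $F$ with entries in $\{0,1\}$, and it then picks each entry by the sign of $\beta_i$.

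\textbf{Your route.} You use only two elementary facts. First, $\tilde\Pi_{ii}=\boldsymbol{q}_i^\top\Pi\boldsymbol{q}_i\in[0,1]$, which gives the lower bound $\sum_{i\le k}\beta_i$ directly. Second, a positive semidefinite matrix with a zero diagonal entry has the corresponding row and column equal to zero; you apply this to both $\tilde\Pi$ and $I-\tilde\Pi$.

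\textbf{What each buys.} The paper's argument is shorter and ties the result to the persuasion literature. However, it only shows that $\Pi^*$ is \emph{an} optimizer. Restricting to diagonal $\{0,1\}$ matrices says nothing about non-diagonal optimizers. So the claimed uniqueness of the minimum-rank solution is not actually established there, and neither is outright uniqueness when no $\beta_i=0$. Your characterization of the full optimal set as $Q\,\mathrm{diag}(I_k,P_0,O)\,Q^\top$ with $O\preceq P_0\preceq I$ is self-contained and needs no external theorem. It is exactly what justifies the word ``unique'' in the statement. You also say explicitly that the optimization is over the covariance relaxation, with achievability deferred to Prop.~\ref{prop:SE_policy}; the paper leaves this implicit.
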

\begin{pf}
Following Lemma~\ref{lem:sdp_equivalence}, we minimize $\tr(B\Pi)$ subject to
$O \preceq \Pi \preceq I$. We perform the spectral decomposition
$B = Q\Lambda Q^\top$, where $\Lambda = \operatorname{diag}(\beta_1, \dots, \beta_n)$
and $Q = [\boldsymbol{q}_1 \dots \boldsymbol{q}_n]$. Let $F = Q^\top \Pi Q$.
Using the cyclic property of the trace, we obtain $\tr(B\Pi) = \tr(Q \Lambda Q^\top \Pi)
    = \tr(\Lambda Q^\top \Pi Q)
    = \tr(\Lambda F)$.
The constraint $O \preceq \Pi \preceq I$ is equivalent to $O \preceq F \preceq I$
since $Q$ is orthogonal. Thus, we are solving $\min_{F\in\mathbb{S}^n} \ \tr(\Lambda F)\quad \text{s.t.} \quad O \preceq F \preceq I$.

This is the same SDP (up to a sign change in the cost matrix) as in
\cite[Thm~1]{tamura2018bayesian}, where it is shown that there exists an
optimal solution to the SDP that is an orthogonal projection matrix in the eigenbasis
of the cost matrix. Thus, without loss of generality, we may restrict attention to $F$ that
are diagonal in the eigenbasis of $B$, with diagonal entries in $\{0,1\}$. Hence we can write $F = \operatorname{diag}(f_{11},\dots,f_{nn})$ with
$f_{ii} \in \{0,1\}$, and the objective becomes the following sum $\tr(\Lambda F) = \sum_{i=1}^n \beta_i f_{ii}$.
To minimize it, each $f_{ii}$ is chosen based on the sign of $\beta_i$:
\begin{equation*}
    f_{ii}^* =
    \begin{cases}
        1 & \text{if } \beta_i < 0, \\
        0 & \text{if } \beta_i \geq 0.
    \end{cases}
\end{equation*}
(For $\beta_i = 0$, the choice does not affect the cost; we set $f_{ii} = 0$
to obtain the minimum-rank, least-informative solution.)

Since the eigenvalues are sorted such that the first $k$ are negative, the
optimal matrix $F^*$ is diagonal with the first $k$ entries equal to $1$ and
the rest $0$. Transforming back yields
\begin{equation*}
    \Pi^* = Q F^* Q^\top
          = Q
            \begin{bmatrix}
                I_k & O \\
                O   & O
            \end{bmatrix}
            Q^\top
          = Q_k Q_k^\top,
\end{equation*}
where $Q_k \in \mathbb{R}^{n \times k}$ is the matrix containing the first
$k$ columns of $Q$ (the eigenvectors corresponding to negative eigenvalues).
\hfill $\qed$
\end{pf}
The following proposition, adapted from \cite[Thm~2]{tamura2018bayesian}, characterizes an encoding policy that achieves the equilibrium described in Theorem~\ref{thm:SE_covariance}.

\begin{prop}[Equilibrium signaling policy]
    \label{prop:SE_policy}
    Let $k$ be the number of strictly negative eigenvalues of $B$, and let $Q_k \in \mathbb{R}^{n \times k}$ be the matrix of the corresponding eigenvectors (the first $k$ columns of $Q$). 
    
    The equilibrium covariance $\Sigma_u^*$ characterized in Theorem~\ref{thm:SE_covariance} is achieved by a deterministic linear encoder policy $\gamma^{*,e}(\boldsymbol{m}) = L \boldsymbol{m}$, where $L \in \mathbb{R}^{n \times n}$ is constructed as
    \begin{equation}
        L = \begin{bmatrix} Q_k^\top \Sigma_m^{-\frac{1}{2}} \\ O_{(n-k) \times n} \end{bmatrix},
    \end{equation}
    where $O_{(n-k) \times n}$ is a zero matrix padding the remaining dimensions.
\end{prop}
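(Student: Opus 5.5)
We need to show that with $\boldsymbol{x} = L\boldsymbol{m}$ the decoder's estimate $\boldsymbol{u} = \mathbb{E}[\boldsymbol{m}\mid\boldsymbol{x}]$ has covariance $\Sigma_u = \Sigma_m^{\frac12}Q_kQ_k^\top\Sigma_m^{\frac12} = \Sigma_u^*$. Optimality of this $\Sigma_u^*$ is already established in Theorem~\ref{thm:SE_covariance}, and by \eqref{eq:sender_problem_ct_multidim} the encoder's cost depends on the policy only through $\Sigma_u$. The plan is therefore a direct Gaussian conditioning computation.

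\textbf{Reduce to the informative components.} The last $n-k$ entries of $\boldsymbol{x}$ are identically zero, so conditioning on $\boldsymbol{x}$ is the same as conditioning on $\boldsymbol{z} \coloneqq Q_k^\top\Sigma_m^{-\frac12}\boldsymbol{m}\in\mathbb{R}^k$. Since $(\boldsymbol{m},\boldsymbol{z})$ is jointly Gaussian and zero-mean, the conditional mean is linear:
\begin{equation*}
\boldsymbol{u} = \Sigma_{mz}\Sigma_z^{-1}\boldsymbol{z}.
\end{equation*}
Compute the two covariances. First, $\Sigma_z = Q_k^\top\Sigma_m^{-\frac12}\Sigma_m\Sigma_m^{-\frac12}Q_k = Q_k^\top Q_k = I_k$, using orthonormality of the eigenvectors; this confirms $\Sigma_z$ is invertible. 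Second, $\Sigma_{mz} = \Sigma_m\Sigma_m^{-\frac12}Q_k = \Sigma_m^{\frac12}Q_k$. Hence $\boldsymbol{u} = \Sigma_m^{\frac12}Q_kQ_k^\top\Sigma_m^{-\frac12}\boldsymbol{m}$.

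\textbf{Compute the covariance of the estimate.}
\begin{equation*}
\Sigma_u = \Sigma_m^{\frac12}Q_k\,\Sigma_z\,Q_k^\top\Sigma_m^{\frac12} = \Sigma_m^{\frac12}Q_kQ_k^\top\Sigma_m^{\frac12} = \Sigma_m^{\frac12}\Pi^*\Sigma_m^{\frac12} = \Sigma_u^*.
\end{equation*}

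\textbf{Conclude equilibrium.} For any admissible encoder, Prop.~\ref{prop:decoder_mmse} makes the decoder's best response the conditional mean. By the reduction \eqref{eq:sender_problem_ct_multidim} and Lemma~\ref{lem:sdp_equivalence}, the encoder's cost is $\tr(V\Sigma_u)+\mathrm{const}$ over all achievable $\Sigma_u$, and every such $\Sigma_u$ lies in $\{O\preceq\Sigma_u\preceq\Sigma_m\}$. The linear policy attains the minimizer over this relaxed set, so it is optimal among all Borel policies, including nonlinear ones. The decoder's response is $\boldsymbol{u}$ above.

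The edge case $k=0$ needs separate mention. There $L=O$, $\boldsymbol{x}=\boldsymbol{0}$, and so $\boldsymbol{u}=\boldsymbol{0}$ and $\Sigma_u=O$, which matches $\Sigma_u^*$.

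The only real subtlety is the optimality step: we must ensure the relaxation $O\preceq\Sigma_u\preceq\Sigma_m$ contains every policy-induced $\Sigma_u$. This was established before Lemma~\ref{lem:sdp_equivalence} via the orthogonality decomposition $\Sigma_m=\Sigma_u+\Sigma_e$, so achievability by $L$ closes the gap.
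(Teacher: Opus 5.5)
Your proposal is correct and takes essentially the same route as the paper: a direct Gaussian conditioning computation showing that the informative block has identity covariance and cross-covariance $\Sigma_m^{\frac12}Q_k$, hence $\Sigma_u=\Sigma_m^{\frac12}Q_kQ_k^\top\Sigma_m^{\frac12}=\Sigma_u^*$. The differences are cosmetic. You discard the $n-k$ zero rows and use an ordinary inverse, whereas the paper keeps the full $\Sigma_{xx}$ and uses its pseudoinverse. You also spell out the optimality-over-all-Borel-policies step and the $k=0$ case, which the paper leaves implicit.
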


\begin{pf}
The decoder's best response to a linear Gaussian map $\boldsymbol{x} = L\boldsymbol{m}$ is the linear MMSE estimator, explicitly given by $\boldsymbol{u} = \Sigma_{mx}\Sigma_{xx}^{\dagger}\boldsymbol{x}$, where $(\cdot)^\dagger$ denotes the pseudoinverse of a matrix $\Sigma_{xx}$, which handles the rank deficiency of the noiseless signal covariance, ensuring the estimator is well-defined on the signal subspace.\\
First, we compute the covariance of the signal $\boldsymbol{x}$:
\begin{align*}
\Sigma_{xx} &= L \Sigma_m L^\top \\
&= \begin{bmatrix} Q_k^\top \Sigma_m^{-\frac{1}{2}} \\ O \end{bmatrix} \Sigma_m \begin{bmatrix} \Sigma_m^{-\frac{1}{2}} Q_k & O^\top \end{bmatrix} \\
&= \begin{bmatrix} Q_k^\top Q_k & O \\ O & O \end{bmatrix} = \begin{bmatrix} I_k & O \\ O & O_{n-k} \end{bmatrix},
\end{align*}
where we used the orthonormality condition $Q_k^\top Q_k = I_k$.
Next, the cross-covariance $\Sigma_{mx}$ is
\[
\Sigma_{mx} = \Sigma_m L^\top = \Sigma_m \begin{bmatrix} \Sigma_m^{-\frac{1}{2}} Q_k & O^\top \end{bmatrix} = \begin{bmatrix} \Sigma_m^{\frac{1}{2}} Q_k & O \end{bmatrix}.
\]
The posterior estimate is $\boldsymbol{u} = \Sigma_{mx} \Sigma_{xx}^{\dagger} \boldsymbol{x}$. Noting that $\Sigma_{xx}^{\dagger} = \Sigma_{xx}$ (as it is a diagonal projection matrix), the covariance of the estimate is
\begin{align*}
\Sigma_u &= \Sigma_{mx} \Sigma_{xx}^{\dagger} \Sigma_{mx}^\top \\
&= \begin{bmatrix} \Sigma_m^{\frac{1}{2}} Q_k & O \end{bmatrix} \begin{bmatrix} I_k & O \\ O & O \end{bmatrix} \begin{bmatrix} Q_k^\top \Sigma_m^{\frac{1}{2}} \\ O^\top \end{bmatrix} \\
&= \Sigma_m^{\frac{1}{2}} Q_k I_k Q_k^\top \Sigma_m^{\frac{1}{2}} 
= \Sigma_m^{\frac{1}{2}} (Q_k Q_k^\top) \Sigma_m^{\frac{1}{2}}.
\end{align*}
Since $\Pi^* = Q_k Q_k^\top$ (from the proof of Theorem~\ref{thm:SE_covariance}), we have $\Sigma_u = \Sigma_m^{\frac{1}{2}} \Pi^* \Sigma_m^{\frac{1}{2}} = \Sigma_u^*$. Thus, the policy achieves the SE.
\hfill $\qed$
\end{pf}

\section{Noisy Case: Signaling Game}
\label{sec:noisy}
We now turn to the general setting where the channel is noisy and the encoder faces a transmission cost ($\rho > 0$). Unlike the cheap-talk setting, the signal is corrupted by noise, preventing the decoder from perfectly inverting the encoder's map even if it is injective.

\subsection{Scalar Case}
\label{sec:signaling_scalar}
We first analyze the scalar case ($n=1$) where $m \sim \mathcal{N}(0, \sigma_m^2)$ and $w \sim \mathcal{N}(0, \sigma_w^2)$ with $\sigma_w^2 > 0$. The encoder chooses a policy $\gamma^{\mathrm{e}}: \mathbb{R} \to \mathbb{R}$, resulting in $x = \gamma^{\mathrm{e}}(m)$ and $y = x + w$.

\begin{thm}[Stackelberg signaling thresholds]
    \label{thm:noisy_scalar}
    \setlength{\abovedisplayskip}{3pt}
    \setlength{\belowdisplayskip}{3pt}
    \setlength{\abovedisplayshortskip}{0pt}
    \setlength{\belowdisplayshortskip}{0pt}
    
    The existence and structure of SE are determined by the bias parameter $a$:
\begin{enumerate}[label=\textbf{\arabic*.},topsep=-\baselineskip] 
\item If $a \leq \tfrac{1}{2}$, the unique SE is non-informative ($P^*=0$).
\item If $a >\tfrac{1}{2}$, an informative SE exists if and only if the transmission cost satisfies
\begin{equation}
    \label{eq:rho_condition}
    0 < \rho < \frac{\sigma_m^2}{\sigma_w^2} (2a - 1).
\end{equation}
where the optimal transmission power is given by
\begin{equation}
    \label{eq:opt_power_scalar}
    P^* = \sigma_w \sqrt{\frac{(2a - 1)\sigma_m^2}{\rho}} - \sigma_w^2.
\end{equation}
Otherwise, the SE is non-informative ($P^* = 0$).
\end{enumerate}
\end{thm}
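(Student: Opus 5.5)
The plan is to reduce the encoder's problem to a one-dimensional optimization over the transmission power $P \coloneqq \mathbb{E}[x^2]$, exactly as in the proof of the scalar cheap-talk theorem, and then use an information-theoretic bound to pin down the best achievable $\mathbb{E}[u^2]$ at each power level. First, by Prop.~\ref{prop:decoder_mmse} the decoder plays $u=\mathbb{E}[m\mid y]$. The same algebra as in the noiseless scalar proof applies: the $b$-terms are policy-independent since $\mathbb{E}[u]=\mathbb{E}[m]$, and $am-u=(a-1)u+a(m-u)$ with orthogonality. This gives
\begin{equation*}
J^{\mathrm{e}}=(1-2a)\,\mathbb{E}[u^2]+\rho P+\mathrm{cte}.
\end{equation*}
Note that this step does not use noiselessness, only the orthogonality of the MMSE error.

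Second, for a fixed power $P$, I bound $\mathbb{E}[u^2]=\sigma_m^2-\mathbb{E}[(m-u)^2]$ over \emph{all} (possibly nonlinear) encoders with $\mathbb{E}[x^2]\le P$. By the data-processing inequality, $I(m;u)\le I(m;y)\le I(x;y)\le \tfrac12\log(1+P/\sigma_w^2)$, using the Gaussian channel capacity. The Gaussian rate--distortion function then gives $\mathbb{E}[(m-u)^2]\ge \sigma_m^2 2^{-2I(m;u)}\ge \sigma_m^2\sigma_w^2/(P+\sigma_w^2)$. Hence
\begin{equation*}
\mathbb{E}[u^2]\le \sigma_m^2\frac{P}{P+\sigma_w^2}.
\end{equation*}
The linear encoder $x=\sqrt{P/\sigma_m^2}\,m$ achieves this with equality, since it yields the standard linear MMSE. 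This step is the main obstacle, because it is where nonlinear policies must be excluded. I would rely on the classical source--channel matching argument above rather than on any direct computation.

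Third, I optimize over $P\ge 0$. If $a\le\tfrac12$, then $(1-2a)\mathbb{E}[u^2]\ge 0$ and $\rho P\ge0$. Since $\rho>0$, the cost is uniquely minimized at $P=0$, and $P=0$ forces $x=0$ a.s., hence $u=0$: the SE is non-informative. Uniqueness holds because any $P>0$ strictly increases the cost. If $a>\tfrac12$, the cost becomes
\begin{equation*}
f(P)=-(2a-1)\sigma_m^2\frac{P}{P+\sigma_w^2}+\rho P,
\end{equation*}
which is convex in $P$ because $-P/(P+\sigma_w^2)$ is convex. Its derivative is $f'(P)=\rho-(2a-1)\sigma_m^2\sigma_w^2/(P+\sigma_w^2)^2$. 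Setting $f'(P)=0$ gives $P+\sigma_w^2=\sigma_w\sqrt{(2a-1)\sigma_m^2/\rho}$, which is \eqref{eq:opt_power_scalar}. This stationary point is strictly positive exactly when $f'(0)<0$, i.e.\ when $\rho<(2a-1)\sigma_m^2/\sigma_w^2$, which is condition \eqref{eq:rho_condition}. Otherwise $f'\ge0$ on $[0,\infty)$, and by convexity $P^*=0$, giving a non-informative SE.

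Finally, I would remark that a nonzero mean of $x$ only wastes power, so restricting to zero-mean signals is without loss of generality. I would also note that the informative optimum is attained by the linear encoder with power $P^*$, which confirms that an informative SE exists if and only if \eqref{eq:rho_condition} holds.
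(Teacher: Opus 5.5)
Your proposal is correct and follows essentially the same route as the paper. You reduce $J^{\mathrm{e}}$ via MMSE orthogonality to a term linear in $\mathbb{E}[u^2]$ (equivalently in $D=\sigma_m^2-\mathbb{E}[u^2]$) plus $\rho P$, bound $D\ge \sigma_m^2\sigma_w^2/(P+\sigma_w^2)$ by data processing and the Gaussian channel capacity, minimize the strictly convex $f(P)$ over $P\ge 0$ using the sign of $f'(0)$, and achieve the bound with the linear encoder; the only differences are cosmetic (rate--distortion on $I(m;u)$ instead of the paper's direct entropy argument on $I(m;y)$, and tightness noted for every $P$ rather than only at $P^*$).
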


\begin{pf}
We analyze the game where $m \sim \N(0, \sigma_m^2)$ and $w \sim \N(0, \sigma_w^2)$. Let $D \coloneq \E[(m-u)^2]$ and $P \coloneq \E[x^2]$ denote the scalar mean squared estimation error and average transmission power, respectively.

\noindent\emph{Step 1: Encoder's cost.}
By Prop.~\ref{prop:decoder_mmse}, the decoder's best response is $u = \E[m|y]$. Following the same orthogonal decomposition logic used in Section~\ref{sec:cheaptalk_scalar}, we substitute $\E[u^2] = \sigma_m^2 - D$ into the encoder's cost function \eqref{eq:inst_cost_e}. The expected cost simplifies to
\begin{equation}
        J^{\mathrm{e}} =(2a - 1) D + \rho P + (a-1)^2 \sigma_m^2 + b^2.
        \label{eq:Je_DP}
\end{equation}
The encoder's problem is to minimize \eqref{eq:Je_DP} subject to the physical constraints imposed by the channel.
    
\noindent\emph{Step 2: Information-theoretic lower bound.}
First, we relate the mutual information to the distortion using differential entropy:
\begin{align}
I(m; y) &= h(m) - h(m|y) = h(m) - h(m - \mathbb{E}[m|y]\mid y) \nonumber\\
&\geq h(m) - h(m - \mathbb{E}[m|y]) \nonumber\\
&\overset{(a)}{\geq} \frac{1}{2} \log_2(2\pi e\sigma_m^2) - \frac{1}{2} \log_2(2\pi eD)
= \frac{1}{2} \log_2\left(\frac{\sigma_m^2}{D}\right).\nonumber 
\end{align}
Inverting this relation yields $D \geq \sigma_m^2 2^{-2I(m;y)}$. Using the data-processing inequality and the definition of channel capacity $C(P) = \sup_{p(x):\mathbb{E}[x^2] \leq P} I(x;y)$, we obtain:
\begin{align}
D &\overset{(b)}{\geq} \sigma_m^2 \,2^{-2I(x;y)} \nonumber\\
&\geq \sigma_m^2\, 2^{-2 C(P)}\overset{(c)}{=} \sigma_m^2\, 2^{-2\left[ \frac{1}{2} \log_2 \left(1+{P}/{\sigma_w^2}\right) \right]} \nonumber\\
\Rightarrow D &= \mathbb{E}[(m - u)^2]\geq \frac{\sigma_m^2}{1 + P/\sigma_w^2}. 
\label{eq:D_lower_bound}
\end{align}
Here, (a) holds since $h(m) = \frac{1}{2} \log_2(2\pi e\sigma_m^2)$ for a Gaussian source, (b) follows from the data-processing inequality, and (c) substitutes the Gaussian channel capacity.
    
\medskip
\noindent\emph{Step 3: Optimization.}
The encoder seeks to minimize the cost $J^{\mathrm{e}} = (2a - 1) D + \rho P + \mathrm{const}$. We analyze the two regimes for $a$ separately.
\begin{enumerate}[label=\textbf{\arabic*.},topsep=-0.5\baselineskip]
\item{Case $a \leq \tfrac{1}{2}$.} Then $(2a - 1)\leq 0$. To minimize $(2a-1)D$, the encoder must maximize the distortion $D$. The maximum MSE (error variance) is the prior variance $\sigma_m^2$, implying $D \leq \sigma_m^2$. Simultaneously, the encoder seeks to minimize the power cost $\rho P$ (since $\rho > 0$).
Both terms $(2a-1)D$ and $\rho P$ are minimized when the encoder transmits no information ($x=0$), resulting in $P=0$ and maximal distortion $D = \sigma_m^2$. Thus, the unique global minimum is at $P^*=0$.
\item{Case $a > \tfrac{1}{2}$.} The coefficient $(2a - 1)$ is positive. In this regime, the encoder faces a trade-off between reducing distortion and saving power. Substituting the lower bound \eqref{eq:D_lower_bound} into \eqref{eq:Je_DP} yields
\begin{equation}
    \label{eq:lower_bound_cost}
    J^{\mathrm{e}} \geq (a-1)^2 \sigma_m^2 + b^2 + \underbrace{\frac{(2a - 1)\sigma_m^2 \sigma_w^2}{\sigma_w^2 + P} + \rho P}_{\coloneq f(P)}.
\end{equation}
We minimize $f(P)$ over $P \geq 0$. Since $f''(P) > 0$, the function is strictly convex. Setting the derivative $f'(P) = \rho - \frac{(2a - 1)\sigma_m^2 \sigma_w^2}{(\sigma_w^2 + P)^2}$ to zero yields the unique unconstrained minimizer $P^*$ given in \eqref{eq:opt_power_scalar}.

Thus, an informative equilibrium exists if and only if $P^* > 0$. Imposing this inequality yields $\sqrt{\frac{(2a - 1)\sigma_m^2}{\rho}} > \sigma_w$, which simplifies to condition \eqref{eq:rho_condition}. If this condition is not met, $f'(0) \geq 0$ and the minimum occurs at the boundary $P^*=0$ (non-informative).
 \end{enumerate}

\medskip
\noindent\emph{Step 4: Achievability.}
The bound \eqref{eq:D_lower_bound} is tight for Gaussian sources over AWGN channels using linear encoding. Specifically, a linear policy $\gamma^{\mathrm{e}}(m) = \alpha m$ with $\alpha^2 = P^*/\sigma_m^2$ results in the lower bound MMSE $D = \frac{\sigma_m^2}{1 + P^*/\sigma_w^2}$.
Thus, the linear policy achieves the global lower bound of the cost function $J^{\mathrm{e}}$.
Consequently:
\begin{enumerate}[label=\textbf{\arabic*.},topsep=-0.5\baselineskip]  
\item If $a \leq \tfrac{1}{2}$, $P^*=0$, leading to $x=0$ (non-informative).
\item If condition \eqref{eq:rho_condition} holds, the optimal policy is the linear map corresponding to $P^* > 0$ (informative).\hfill $\qed$
\end{enumerate}
\end{pf}

\subsection{Multidimensional Signaling}
\label{sec:signaling_multidim}

We now extend the analysis to the multidimensional signaling game with a noisy channel and transmission cost. The source is $\boldsymbol{m} \sim \mathcal{N}(\boldsymbol{0}, \Sigma_m)$ with $\Sigma_m \succ O$, and the encoder has a general bias matrix $A \in \mathbb{R}^{n \times n}$.

By the same decomposition as in Section~\ref{sec:cheaptalk_multidim}, the encoder's expected cost can be written as $J^{\mathrm{e}} = \tr(V\Sigma_u) + \rho P + \mathrm{const}$, where $V = I - (A + A^\top)$ is the cost kernel from the cheap-talk analysis, $\Sigma_u = \mathbb{E}[\boldsymbol{u}\boldsymbol{u}^\top]$ is the posterior mean covariance, $P = \mathbb{E}[\|\boldsymbol{x}\|^2]$ is the transmission power, and $\mathrm{const} = \tr(A^\top A \Sigma_m) + \|\boldsymbol{b}\|^2$ is independent of the encoder policy. The key difference from the cheap-talk setting is that the achievable pairs $(\Sigma_u, P)$ are now constrained by the channel capacity.

First, following the approach from \cite{saritas2017quadratic}, we derive bounds on the achievable posterior covariance.

\begin{lem}[Determinant bound]
\label{lem:det_bound}
For any encoder policy with total transmission power $P = \mathbb{E}[\|\boldsymbol{x}\|^2]$, the error covariance $\Sigma_e = \Sigma_m - \Sigma_u$ satisfies
\begin{equation}
    |\Sigma_e| \geq |\Sigma_m| \, 2^{-2C_{\mathrm{tot}}(P)},
    \label{eq:det_bound}
\end{equation}
where $C_{\mathrm{tot}}(P)$ is the total capacity of the $n$-dimensional additive Gaussian noise channel with power constraint $P$.
\end{lem}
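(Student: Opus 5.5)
The plan is to lift Step 2 of the proof of Theorem~\ref{thm:noisy_scalar} to the vector setting, replacing variances by determinants. We will chain three inequalities:
\[
I(\boldsymbol{m};\boldsymbol{y}) \geq \tfrac12\log_2\frac{|\Sigma_m|}{|\Sigma_e|},
\qquad
I(\boldsymbol{m};\boldsymbol{y}) \le I(\boldsymbol{x};\boldsymbol{y}),
\qquad
I(\boldsymbol{x};\boldsymbol{y}) \le C_{\mathrm{tot}}(P).
\]
Rearranging the first and combining gives $|\Sigma_e| \geq |\Sigma_m|\,2^{-2C_{\mathrm{tot}}(P)}$. Throughout, $\boldsymbol{u}=\E[\boldsymbol{m}\mid\boldsymbol{y}]$ is the decoder's best response from Prop.~\ref{prop:decoder_mmse}, and the decomposition $\Sigma_m=\Sigma_u+\Sigma_e$ holds by orthogonality exactly as in Section~\ref{sec:cheaptalk_multidim}.

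\emph{First inequality.} Write
\[
I(\boldsymbol{m};\boldsymbol{y})=h(\boldsymbol{m})-h(\boldsymbol{m}-\boldsymbol{u}\mid\boldsymbol{y}),
\]
which uses translation invariance of differential entropy given $\boldsymbol{y}$, since $\boldsymbol{u}$ is a function of $\boldsymbol{y}$. Conditioning reduces entropy, so $h(\boldsymbol{m}-\boldsymbol{u}\mid\boldsymbol{y})\le h(\boldsymbol{m}-\boldsymbol{u})$. Among all vectors with a given second-moment matrix the Gaussian maximizes entropy, so
\[
h(\boldsymbol{m}-\boldsymbol{u})\le \tfrac12\log_2\bigl((2\pi e)^n|\Sigma_e|\bigr).
\]
Here $\boldsymbol{m}-\boldsymbol{u}$ has zero mean, so $\Sigma_e$ is indeed its covariance. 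Since $h(\boldsymbol{m})=\tfrac12\log_2\bigl((2\pi e)^n|\Sigma_m|\bigr)$, the $(2\pi e)^n$ factors cancel and the first inequality follows.

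\emph{Second and third inequalities.} Because $\boldsymbol{m}\to\boldsymbol{x}\to\boldsymbol{y}$ is a Markov chain, the data-processing inequality gives $I(\boldsymbol{m};\boldsymbol{y})\le I(\boldsymbol{x};\boldsymbol{y})$. By the definition of $C_{\mathrm{tot}}(P)$ as the supremum of $I(\boldsymbol{x};\boldsymbol{y})$ over input laws with $\E\|\boldsymbol{x}\|^2\le P$, we get $I(\boldsymbol{x};\boldsymbol{y})\le C_{\mathrm{tot}}(P)$. This holds for any deterministic $\gamma^{\mathrm{e}}$, because the induced law of $\boldsymbol{x}$ is one such input law.

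\emph{Main obstacle: degenerate cases.} If $|\Sigma_e|=0$, the first inequality reads $I(\boldsymbol{m};\boldsymbol{y})=\infty$. That cannot happen when the channel has finite capacity, so the bound is consistent. However, if $\Sigma_w$ is singular then $C_{\mathrm{tot}}(P)=\infty$, and the claimed bound becomes the trivial statement $|\Sigma_e|\ge 0$. We therefore only need to treat $\Sigma_w\succ O$, where $\boldsymbol{y}$ has a density and all entropies are well defined. The other delicate point is justifying $h(\boldsymbol{m}-\boldsymbol{u}\mid\boldsymbol{y})$ when $\boldsymbol{u}$ is a nonlinear function. The translation-invariance argument handles this conditionally on each realization of $\boldsymbol{y}$, so no linearity assumption is needed.
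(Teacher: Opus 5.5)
Your proof is correct and follows the paper's argument essentially verbatim. Like the paper, you chain $I(\boldsymbol{m};\boldsymbol{y}) \geq h(\boldsymbol{m}) - h(\boldsymbol{m}-\boldsymbol{u}) \geq \tfrac12\log_2(|\Sigma_m|/|\Sigma_e|)$ (via translation invariance, conditioning, and the Gaussian maximum-entropy bound) with data processing and $I(\boldsymbol{x};\boldsymbol{y}) \leq C_{\mathrm{tot}}(P)$. Your handling of the degenerate cases goes beyond the paper, which does not spell them out: a singular $\Sigma_w$ makes the bound trivial, and a singular $\Sigma_e$ is ruled out by finite capacity.
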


\begin{pf}
By the data-processing inequality and the definition of channel capacity,
\begin{align*}
    I(\boldsymbol{m}; \boldsymbol{y}) &= h(\boldsymbol{m}) - h(\boldsymbol{m}|\boldsymbol{y}) \\
    &= h(\boldsymbol{m}) - h(\boldsymbol{m} - \mathbb{E}[\boldsymbol{m}|\boldsymbol{y}] \mid \boldsymbol{y}) \\
    &\geq h(\boldsymbol{m}) - h(\boldsymbol{m} - \mathbb{E}[\boldsymbol{m}|\boldsymbol{y}]) \\
    &\geq \frac{1}{2}\log_2((2\pi e)^n |\Sigma_m|) - \frac{1}{2}\log_2((2\pi e)^n |\Sigma_e|) \\
    &= \frac{1}{2}\log_2\left(\frac{|\Sigma_m|}{|\Sigma_e|}\right).
\end{align*}
Since $I(\boldsymbol{m}; \boldsymbol{y}) \leq I(\boldsymbol{x}; \boldsymbol{y}) \leq C_{\mathrm{tot}}(P)$, we obtain $|\Sigma_e| \geq |\Sigma_m| \, 2^{-2C_{\mathrm{tot}}(P)}$.
\hfill $\qed$
\end{pf}

For the colored Gaussian noise channel with covariance $\Sigma_w$, we use the water-filling capacity and the arithmetic–geometric mean inequality to bound the capacity.
\begin{lem}[Capacity bound]
\label{lem:capacity_bound}
The capacity $C_{\mathrm{tot}}(P)$ of the additive Gaussian noise channel with covariance $\Sigma_w \succ O$ satisfies
\begin{equation}
    2^{-2{C_{\mathrm{tot}}(P)}/{n}} \geq \frac{|\Sigma_w|^{1/n}}{P/n + \frac{1}{n}\tr(\Sigma_w)}.
    \label{eq:capacity_bound}
\end{equation}
\end{lem}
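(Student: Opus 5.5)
The plan is to use the water-filling formula for the capacity of the colored Gaussian noise channel and then bound the resulting sum of logarithms by concavity of the logarithm.

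\textbf{Step 1: capacity formula.} Diagonalize $\Sigma_w = U \operatorname{diag}(\lambda_1,\dots,\lambda_n) U^\top$ with $\lambda_i>0$. Since $U$ is orthogonal, the rotation $\boldsymbol{x}\mapsto U^\top\boldsymbol{x}$ preserves power, so the channel is equivalent to $n$ parallel scalar AWGN channels with noise variances $\lambda_i$. Water-filling then gives
\begin{equation*}
C_{\mathrm{tot}}(P)=\max_{P_i\ge 0,\ \sum_i P_i\le P}\ \frac12\sum_{i=1}^n \log_2\Big(1+\frac{P_i}{\lambda_i}\Big).
\end{equation*}
An equivalent form, which avoids relying on the explicit water-filling solution, is $C_{\mathrm{tot}}(P)=\max_{K\succeq O,\ \tr K\le P}\tfrac12\log_2\big(|K+\Sigma_w|/|\Sigma_w|\big)$.

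\textbf{Step 2: bound via AM--GM.} Take the second form. For any feasible $K$, the arithmetic--geometric mean inequality applied to the eigenvalues of $K+\Sigma_w\succ O$ gives
\begin{equation*}
|K+\Sigma_w|^{1/n}\le \frac{1}{n}\tr(K+\Sigma_w)\le \frac{P}{n}+\frac{1}{n}\tr(\Sigma_w).
\end{equation*}
Taking the maximum over $K$ yields
\begin{equation*}
\frac{2C_{\mathrm{tot}}(P)}{n}\le \log_2\frac{P/n+\tr(\Sigma_w)/n}{|\Sigma_w|^{1/n}}.
\end{equation*}

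\textbf{Step 3: conclude.} Since $t\mapsto 2^{-t}$ is decreasing, exponentiating the last inequality gives exactly \eqref{eq:capacity_bound}.

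\textbf{Main obstacle.} The only delicate point is Step 1: justifying the determinant (log-det) capacity formula, namely that Gaussian inputs are optimal for additive Gaussian noise under a trace constraint. This is standard (Cover--Thomas) and I would cite it rather than reprove it. A useful side remark is that the bound is tight when $\Sigma_w$ is isotropic, because then all eigenvalues of $K+\Sigma_w$ can be made equal.
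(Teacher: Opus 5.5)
Your proof is correct and is essentially the paper's argument: the paper applies AM--GM to the numbers $\max(\nu,\lambda_i)$, which are exactly the eigenvalues of $K^\star+\Sigma_w$ at the water-filling optimum and sum to $P+\tr(\Sigma_w)$, whereas you apply the same inequality to $K+\Sigma_w$ for every feasible $K$, which spares you from needing the explicit water level. The ``obstacle'' you flag is not actually needed. For the inequality you only need the easy converse bound $I(\boldsymbol{x};\boldsymbol{y}) = h(\boldsymbol{y})-h(\boldsymbol{w}) \le \tfrac12\log_2\bigl(|K_x+\Sigma_w|/|\Sigma_w|\bigr)$ with $\tr(K_x)\le P$, which follows from the Gaussian maximum-entropy property alone; you do not need the optimality of Gaussian inputs.
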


\begin{pf}
The capacity of the additive colored Gaussian noise channel is achieved by the water-filling power allocation. The total capacity is given by \citep[Eq. (9.166)]{cover1999elements}:
\[
    C_{\mathrm{tot}}(P) = \sum_{i=1}^n \frac{1}{2} \log_2 \left( 1 + \frac{\max(\nu - \lambda_i, 0)}{\lambda_i} \right),
\]
where $\lambda_1, \dots, \lambda_n$ are the eigenvalues of $\Sigma_w$, and $\nu$ is the water level chosen such that the total power constraint is satisfied: $\sum_{i=1}^n \max(\nu - \lambda_i, 0) = P$.

We examine the term $2^{-2{C_{\mathrm{tot}}{(P)}}/{n}}$:
\begin{align*}
    2^{-2C_{\mathrm{tot}}(P)/n} &= 2^{-\frac{2}{n} \sum_{i=1}^n \frac{1}{2} \log_2 \left( 1 + \frac{\max(\nu - \lambda_i, 0)}{\lambda_i} \right)} \\
    &= \prod_{i=1}^n \left( 1 + \frac{\max(\nu - \lambda_i, 0)}{\lambda_i} \right)^{-1/n} \\
    &= \prod_{i=1}^n \left( \frac{\lambda_i}{\max(\nu, \lambda_i)} \right)^{1/n} \\
    &= \frac{(\prod_{i=1}^n \lambda_i)^{1/n}}{(\prod_{i=1}^n \max(\nu, \lambda_i))^{1/n}}.
\end{align*}
The numerator is the geometric mean of the eigenvalues, which is $|\Sigma_w|^{1/n}$. For the denominator, we apply the inequality of arithmetic and geometric means (AM--GM):
\begin{align*}
    2^{-2C_{\mathrm{tot}}(P)/n} &\stackrel{(a)}{\geq} \frac{|\Sigma_w|^{1/n}}{\frac{1}{n} \sum_{i=1}^n \max(\nu, \lambda_i)} \\
    &= \frac{|\Sigma_w|^{1/n}}{\frac{1}{n} \sum_{i=1}^n \left( \max(\nu - \lambda_i, 0) + \lambda_i \right)} \\
    &= \frac{|\Sigma_w|^{1/n}}{\frac{1}{n} \left( P + \tr(\Sigma_w) \right)}.
\end{align*}
Here, (a) follows as the geometric mean is less than or equal to the arithmetic mean.
\hfill $\qed$
\end{pf}

Combining Lemmas~\ref{lem:det_bound} and \ref{lem:capacity_bound} with the AM--GM inequality yields bounds on the achievable posterior covariance.
\begin{prop}[Posterior covariance bounds]
\label{prop:cov_bounds}
The trace of the posterior mean covariance $\Sigma_u$ is upper-bounded by
\begin{equation*}
    \tr(\Sigma_u) \leq \tr(\Sigma_m) - n^2 \, |\Sigma_m|^{\frac{1}{n}} |\Sigma_w|^{\frac{1}{n}} \left(P + \tr(\Sigma_w)\right)^{-1}.
    \label{eq:trace_upper_bound}
\end{equation*}
\end{prop}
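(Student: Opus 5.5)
The proof will combine the two preceding lemmas and then pass from a determinant bound to a trace bound through AM--GM applied to the eigenvalues of $\Sigma_e = \Sigma_m - \Sigma_u$. Since $\Sigma_u = \Sigma_m - \Sigma_e$, it suffices to lower-bound $\tr(\Sigma_e)$ by
\begin{equation*}
n^2 |\Sigma_m|^{1/n}|\Sigma_w|^{1/n}(P+\tr(\Sigma_w))^{-1}.
\end{equation*}

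\textbf{Step 1 (determinant bound).} From Lemma~\ref{lem:det_bound} I take $|\Sigma_e| \geq |\Sigma_m|\,2^{-2C_{\mathrm{tot}}(P)}$. Taking $n$-th roots gives
\begin{equation*}
|\Sigma_e|^{1/n} \geq |\Sigma_m|^{1/n}\,2^{-2C_{\mathrm{tot}}(P)/n}.
\end{equation*}
Substituting Lemma~\ref{lem:capacity_bound} then yields
\begin{equation*}
|\Sigma_e|^{1/n} \geq |\Sigma_m|^{1/n}|\Sigma_w|^{1/n}\,\frac{n}{P+\tr(\Sigma_w)}.
\end{equation*}
Here I note that the right-hand side is strictly positive. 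Because $\Sigma_m\succ O$ and $\Sigma_w\succ O$, this forces $|\Sigma_e|>0$, so $\Sigma_e \succ O$ and its eigenvalues $e_1,\dots,e_n$ are all positive.

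\textbf{Step 2 (AM--GM).} Since $\Sigma_e \succeq O$ is symmetric, AM--GM on its eigenvalues gives
\begin{equation*}
\tfrac{1}{n}\tr(\Sigma_e) = \tfrac{1}{n}\textstyle\sum_i e_i \geq \big(\prod_i e_i\big)^{1/n} = |\Sigma_e|^{1/n}.
\end{equation*}
Chaining this with Step 1 gives
\begin{equation*}
\tr(\Sigma_e) \geq n\,|\Sigma_e|^{1/n} \geq n^2|\Sigma_m|^{1/n}|\Sigma_w|^{1/n}(P+\tr(\Sigma_w))^{-1}.
\end{equation*}

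\textbf{Step 3 (conclusion).} I substitute $\tr(\Sigma_u)=\tr(\Sigma_m)-\tr(\Sigma_e)$, which is linearity of trace applied to the decomposition $\Sigma_m=\Sigma_u+\Sigma_e$ from Section~\ref{sec:cheaptalk_multidim}. That decomposition holds verbatim for $\boldsymbol{u}=\mathbb{E}[\boldsymbol{m}\mid\boldsymbol{y}]$ by orthogonality.

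The only delicate point is bookkeeping in Step 1. The $n$-th root must be applied to the product form, and the factor $n$ coming from $(P/n + \tr(\Sigma_w)/n)^{-1}$ must be tracked so that it combines with the AM--GM factor $n$ to give exactly $n^2$. I also need Lemma~\ref{lem:det_bound} in its $(2\pi e)^n$ normalized form, which relies on the Gaussian maximum-entropy bound $h(\boldsymbol{m}-\mathbb{E}[\boldsymbol{m}\mid\boldsymbol{y}])\le \tfrac12\log_2((2\pi e)^n|\Sigma_e|)$; that bound is already used inside that lemma.
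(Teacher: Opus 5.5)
Your proof is correct and follows the paper's argument. You chain Lemmas~\ref{lem:det_bound} and~\ref{lem:capacity_bound} to lower-bound $|\Sigma_e|^{1/n}$, pass to $\tr(\Sigma_e)\geq n|\Sigma_e|^{1/n}$, and conclude with $\tr(\Sigma_u)=\tr(\Sigma_m)-\tr(\Sigma_e)$; the only cosmetic difference is that you get $\tr(\Sigma_e)\geq n|\Sigma_e|^{1/n}$ from AM--GM on the eigenvalues directly, whereas the paper applies AM--GM to the diagonal entries and then Hadamard's inequality (your side remark that $\Sigma_e\succ O$ is harmless but unnecessary, since AM--GM holds for nonnegative eigenvalues).
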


\begin{pf}
By the AM--GM inequality applied to the diagonal entries of $\Sigma_e$,
\[
    \tr(\Sigma_e) = \sum_{i=1}^n \Sigma_e(i,i) \geq n \left(\prod_{i=1}^n \Sigma_e(i,i)\right)^{\frac{1}{n}} \stackrel{(*)} \geq n \, |\Sigma_e|^{\frac{1}{n}}.
\]
Where (*) follows from the Hadamard inequality (as $\Sigma_e$ is positive semi-definite). From Lemma~\ref{lem:det_bound}, we have $|\Sigma_e|^{\frac{1}{n}} \geq |\Sigma_m|^{\frac{1}{n}} 2^{-2\frac{C_{\mathrm{tot}}(P)}{n}}$. Substituting the bound from Lemma~\ref{lem:capacity_bound} gives
\[
    \tr(\Sigma_e) \geq n \, |\Sigma_m|^{\frac{1}{n}} \left(\frac{n|\Sigma_w|^{{\frac{1}{n}}}}{P + \tr(\Sigma_w)}\right) = \frac{n^2 |\Sigma_m|^{\frac{1}{n}} |\Sigma_w|^{\frac{1}{n}}}{P + \tr(\Sigma_w)}.
\]
The result follows from $\tr(\Sigma_u) = \tr(\Sigma_m) - \tr(\Sigma_e)$.
\hfill $\qed$
\end{pf}

We now analyze the possibility of informative equilibria. To maintain tractability, we consider the case of isotropic sensitivity $A = aI$.
\begin{thm}[Signaling threshold]
\label{thm:proportional_bias}
Consider $A = aI$ for scalar $a > 0$.
\begin{enumerate}[label=\textbf{\arabic*.},topsep=-1\baselineskip,itemindent=10pt]
\item If $a \leq \tfrac{1}{2}$, the unique equilibrium is non-informative for any $\rho > 0$.
\item If $a > \tfrac{1}{2}$, the information-theoretic lower bound on the encoder's cost, derived from Prop.~\ref{prop:cov_bounds}, is minimized by a strictly positive power $P > 0$ if and only if
    \begin{equation}
        \rho < (2a-1) \frac{n^2 |\Sigma_m|^{\frac{1}{n}} |\Sigma_w|^{\frac{1}{n}}}{(\tr(\Sigma_w))^2}.
        \label{eq:rho_threshold_multidim}
    \end{equation}
This condition characterizes the regime in which the information-theoretic constraints allow cost reduction via signaling.
\end{enumerate}
\end{thm}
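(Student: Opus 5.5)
With $A=aI$ the cost kernel is $V = I-(A+A^\top) = (1-2a)I$. I will substitute this into the cost decomposition of Section~\ref{sec:signaling_multidim}, $J^{\mathrm{e}}=\tr(V\Sigma_u)+\rho P+\mathrm{const}$, and use $\Sigma_u=\Sigma_m-\Sigma_e$. This gives
\[
J^{\mathrm{e}} = (2a-1)\tr(\Sigma_e) + \rho P + \mathrm{const}',
\]
where $\mathrm{const}'=(1-2a)\tr(\Sigma_m)+\mathrm{const}$ does not depend on the encoder. This is the exact vector analogue of \eqref{eq:Je_DP}, and the two cases of the statement then follow the pattern of the proof of Theorem~\ref{thm:noisy_scalar}.

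\emph{Case $a\le 1/2$.} The coefficient $2a-1$ is nonpositive and $\rho>0$, so the encoder prefers $\tr(\Sigma_e)$ as large as possible and $P$ as small as possible. By the orthogonality decomposition, $\Sigma_e\preceq\Sigma_m$, so $\tr(\Sigma_e)\le\tr(\Sigma_m)$. The constant policy $\boldsymbol{x}=\boldsymbol{0}$ attains both $\tr(\Sigma_e)=\tr(\Sigma_m)$ and $P=0$ at once, so it is optimal.

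For uniqueness I will argue that any policy with $P>0$ pays strictly more than the babbling cost. When $a<1/2$ both terms are individually no better than under babbling and the power term is strictly worse. When $a=1/2$ only the power term matters, and it is strictly minimized at $P=0$. Since $P=0$ forces $\boldsymbol{x}=\boldsymbol{0}$ almost surely, $\boldsymbol{u}=\E[\boldsymbol{m}]$ and the equilibrium is non-informative.

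\emph{Case $a>1/2$.} Here $2a-1>0$, so I will lower-bound $\tr(\Sigma_e)$ using Prop.~\ref{prop:cov_bounds}, i.e.\ $\tr(\Sigma_e)\ge K/(P+\tr(\Sigma_w))$ with $K:=n^2|\Sigma_m|^{1/n}|\Sigma_w|^{1/n}$. This gives
\[
J^{\mathrm{e}}\ge \mathrm{const}'+g(P),\qquad g(P):=\frac{(2a-1)K}{P+\tr(\Sigma_w)}+\rho P .
\]
The function $g$ is strictly convex on $P\ge0$, because $g''(P)=2(2a-1)K/(P+\tr(\Sigma_w))^3>0$. Hence its minimizer over $[0,\infty)$ is strictly positive if and only if $g'(0)<0$. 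Computing $g'(0)=\rho-(2a-1)K/\tr(\Sigma_w)^2$ turns this into exactly \eqref{eq:rho_threshold_multidim}. When the condition holds, the interior minimizer is explicit: $P^*=\sqrt{(2a-1)K/\rho}-\tr(\Sigma_w)$.

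The main point of care is what the second claim actually asserts. Unlike Theorem~\ref{thm:noisy_scalar}, the bound from Prop.~\ref{prop:cov_bounds} is generally not tight: the Hadamard and AM--GM steps are equalities only for suitably isotropic, whitened configurations. So I will not claim achievability. I will prove only the statement as written, namely that the lower bound $g$ is minimized at positive power if and only if \eqref{eq:rho_threshold_multidim} holds.

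I would add a remark interpreting this as a necessary condition for the bound to permit gains from signaling. In the special case $\Sigma_m=\sigma_m^2 I$ and $\Sigma_w=\sigma_w^2 I$, the threshold reduces to $(2a-1)\sigma_m^2/\sigma_w^2$, consistent with the scalar theorem, and there a linear encoder meets the bound.
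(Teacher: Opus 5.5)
Your proposal is correct and follows essentially the same route as the paper. You substitute $V=(1-2a)I$, dispose of $a\le\tfrac12$ by noting that babbling optimizes both cost terms at once, and for $a>\tfrac12$ plug the trace bound of Prop.~\ref{prop:cov_bounds} into the cost and use strict convexity to reduce everything to the sign of the derivative at $P=0$; writing the cost via $\tr(\Sigma_e)$ instead of $\tr(\Sigma_u)$ is only a reparametrization, while your strict-inequality uniqueness argument for $a\le\tfrac12$ and the closed-form interior minimizer spell out what the paper leaves implicit.
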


\begin{pf}
With $V = (1-2a)I$, the encoder minimizes $J^{\mathrm{e}} = (1-2a)\tr(\Sigma_u) + \rho P$.
\begin{enumerate}[label=\textbf{\arabic*.},topsep=-1\baselineskip,itemindent=10pt]
\item{If $a \leq \tfrac{1}{2}$.} The coefficient $(1-2a)$ is non-negative. Minimizing the cost requires minimizing $\tr(\Sigma_u)$ and $P$. The global minimum is attained at $\Sigma_u = O$ and $P = 0$, which is achievable by transmitting no signal. Thus, the equilibrium is non-informative.
\item {If $a > \tfrac{1}{2}$.} The coefficient $(1-2a)$ is negative. Substituting the bound from Prop.~\ref{prop:cov_bounds} into the objective yields the lower bound function $f(P)$:
\[
    J^{\mathrm{e}}(P) \geq f(P) \coloneqq (1-2a) \left[ \tr(\Sigma_m) - \frac{\kappa}{P + \tau} \right] + \rho P,
\]
where $\kappa = n^2 |\Sigma_m|^{\frac{1}{n}} |\Sigma_w|^{\frac{1}{n}}$ and $\tau = \tr(\Sigma_w)$. We minimize $f(P)$ over $P \geq 0$. Computing the derivatives:
\[
    f'(P) = -\frac{(2a-1)\kappa}{(P+\tau)^2} + \rho, \quad f''(P) = \frac{2(2a-1)\kappa}{(P+\tau)^3}.
\]
Since $a > 1/2$, $f''(P) > 0$ for all $P \geq 0$, so $f(P)$ is strictly convex. The unique global minimum occurs at $P^* > 0$ if and only if $f'(0) < 0$, which is equivalent to
\[
    \rho < \frac{(2a-1)\kappa}{\tau^2} = (2a-1) \frac{n^2 |\Sigma_m|^{\frac{1}{n}} |\Sigma_w|^{\frac{1}{n}}}{(\tr(\Sigma_w))^2}.
\]
Thus, condition \eqref{eq:rho_threshold_multidim} is necessary and sufficient for the lower bound $f(P)$ to be minimized at a non-zero power.\hfill$\qed$
\end{enumerate}
\end{pf}

\begin{cor}[The i.i.d.\ case]
    \label{cor:iid_case}
    Consider the case where $\Sigma_m = \sigma_m^2 I$ and $\Sigma_w = \sigma_w^2 I$. In this setting, an informative affine equilibrium exists if and only if
    \begin{equation}
        \rho < (2a-1) \frac{\sigma_m^2}{\sigma_w^2}.
        \label{eq:rho_threshold_iid}
    \end{equation}
    recovering the scalar threshold.
\end{cor}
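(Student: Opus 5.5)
Specialize Theorem~\ref{thm:proportional_bias} to $\Sigma_m=\sigma_m^2 I$ and $\Sigma_w=\sigma_w^2 I$, then close the gap between its lower-bound statement and an actual equilibrium by building an explicit affine (in fact linear) encoder that attains the bound.

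\textbf{Step 1 (threshold).} Substitute into \eqref{eq:rho_threshold_multidim}. We have $|\Sigma_m|^{1/n}=\sigma_m^2$, $|\Sigma_w|^{1/n}=\sigma_w^2$ and $\tr(\Sigma_w)=n\sigma_w^2$. Hence $\kappa=n^2\sigma_m^2\sigma_w^2$ and $\tau=n\sigma_w^2$, so $(2a-1)\kappa/\tau^2=(2a-1)\sigma_m^2/\sigma_w^2$. This is \eqref{eq:rho_threshold_iid}, and it coincides with \eqref{eq:rho_condition}. For $a\le 1/2$ the condition is vacuous, since its right side is $\le 0<\rho$. This is consistent with part 1 of Theorem~\ref{thm:proportional_bias}, where the equilibrium is non-informative.

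\textbf{Step 2 (tightness of the bound).} In the i.i.d.\ case, every inequality in the chain of Lemma~\ref{lem:det_bound}, Lemma~\ref{lem:capacity_bound} and Prop.~\ref{prop:cov_bounds} can hold with equality.
\begin{itemize}
\item Water-filling over equal noise eigenvalues gives $\max(\nu,\lambda_i)=\nu$ for every $i$, so the AM--GM step in Lemma~\ref{lem:capacity_bound} is an equality.
\item Take the linear encoder $\boldsymbol{x}=\alpha\boldsymbol{m}$ with $\alpha^2=P/(n\sigma_m^2)$. The problem then decouples into $n$ independent scalar Gaussian channels, each with power $P/n$. By Step 4 of Theorem~\ref{thm:noisy_scalar}, each coordinate attains $D_i=\sigma_m^2/(1+P/(n\sigma_w^2))$.
\item The resulting $\Sigma_e$ is a scalar multiple of $I$, so the Hadamard and AM--GM steps are equalities. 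Its trace equals $n^2\sigma_m^2\sigma_w^2/(P+n\sigma_w^2)$, which matches Prop.~\ref{prop:cov_bounds}.
\end{itemize}
Therefore the lower bound $f(P)$ equals the true minimal cost at each power level, i.e.\ the bound is attained on the linear family.

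\textbf{Step 3 (conclusion).} By Step 2, the encoder's minimal cost is $\min_{P\ge0}f(P)$, and this minimum is attained by the linear encoder. By the strict convexity established in Theorem~\ref{thm:proportional_bias}, the minimizer satisfies $P^*>0$ if and only if \eqref{eq:rho_threshold_iid} holds. In that case the linear policy with power $P^*$ is an SE, and it is informative. Conversely, if the condition fails, $f$ is minimized uniquely at $P=0$. Since every policy costs at least $f(P)$ and $f(P)>f(0)$ for $P>0$, no informative policy can be optimal.

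\textbf{Main obstacle.} The delicate part of the converse is the case $P=0$ with an informative signal. This cannot happen: $P=\E\|\boldsymbol{x}\|^2=0$ forces $\boldsymbol{x}=0$ almost surely, which is non-informative. The other check is that equality in Step 2 requires Gaussian inputs. The linear map supplies these, because $\alpha\boldsymbol{m}$ is i.i.d.\ Gaussian, which is exactly the capacity-achieving input for this channel.
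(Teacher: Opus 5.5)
Your proposal is correct and follows the same route as the paper. You specialize the threshold of Theorem~\ref{thm:proportional_bias} using $|\Sigma_m|^{1/n}=\sigma_m^2$, $|\Sigma_w|^{1/n}=\sigma_w^2$ and $\tr(\Sigma_w)=n\sigma_w^2$, and you observe that the chain of capacity, Hadamard and AM--GM bounds is tight for the scalar linear encoder $\boldsymbol{x}=\alpha\boldsymbol{m}$ with uniform power, so $f(P)$ is the true optimal cost. Your version spells out what the paper only asserts: the per-coordinate MMSE check, and the converse that any policy with $P>0$ is strictly suboptimal while $P=0$ forces $\boldsymbol{x}=0$ almost surely.
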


\begin{pf}
In the i.i.d.\ case, $\tr(\Sigma_m) = n\sigma_m^2$ and $|\Sigma_m|^{1/n} = \sigma_m^2$, so $\tr(\Sigma_m) = n|\Sigma_m|^{1/n}$. Similarly, $\tr(\Sigma_w) = n|\Sigma_w|^{1/n}$. Consequently, the AM--GM inequalities used in Lemmas~\ref{lem:capacity_bound} and Prop.~\ref{prop:cov_bounds} hold with equality at $P=0$ and for uniform power allocations. Specifically, $f(0) = J^{\mathrm{e}}(0)$, and the lower bound $f(P)$ is achievable by scalar linear policies $\boldsymbol{x} = \alpha \boldsymbol{m}$. Therefore, the condition \eqref{eq:rho_threshold_multidim}, which simplifies to \eqref{eq:rho_threshold_iid} in this case, becomes necessary and sufficient for the existence of an informative equilibrium.
\hfill $\qed$
\end{pf}

\section{Numerical Illustrations}
\label{sec:simulations}
We illustrate our theoretical findings through numerical examples covering both the scalar signaling game and the multidimensional cheap-talk setting.\\
\textbf{Scalar Case.} Figure~\ref{fig:phase_diagram} shows the phase transition between informative and non-informative equilibria in the $(a, \rho)$ parameter space. We set $\sigma_m^2 = 1$ and $\sigma_w^2 = 0.5$. The solid black curve represents the theoretical boundary $\rho = \frac{\sigma_m^2}{\sigma_w^2}(2a-1)$. For $a \leq 1/2$ (left of the dashed line) or when the cost is too high (hatched region), the equilibrium is strictly non-informative, resulting in zero transmission power ($P^*=0$). For $a > 1/2$, an informative equilibrium exists below the boundary. The color gradient indicates the magnitude of the optimal transmission power $P^*$, showing that communication intensity increases as the sensitivity mismatch decreases (higher $a$) or the cost $\rho$ decreases.
\begin{figure}[t!]
    \centering
    \includegraphics[width=0.8\linewidth]{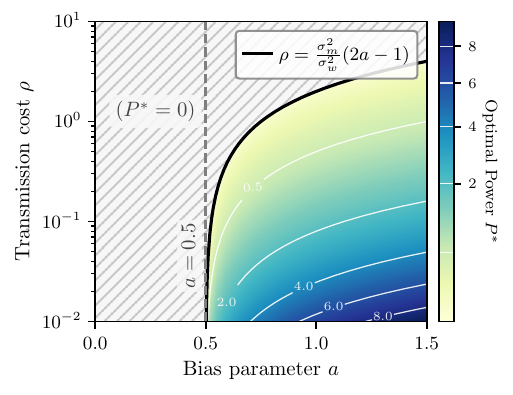}
    \caption{Phase diagram for the scalar signaling game. The boundary separates the region where communicating is beneficial for the encoder (informative) from the region where the optimal encoder is non-informative. The color intensity represents the optimal power $P^*$.}
    \label{fig:phase_diagram}
\end{figure}

\textbf{Multidimensional Case.} To visualize the spectral characterization of information revelation Theorem~\ref{thm:SE_covariance}, we consider a 2-dimensional cheap-talk game (noiseless, $\rho=0$). We set the sensitivity matrix $A = \mathrm{diag}(0.8, 0.2)$ and assume a zero-mean source with marginal variances $\Sigma_{11}=1$ and $\Sigma_{22}=1.5$.
Figure~\ref{fig:vector_scatter} compares the equilibrium outcomes for an independent source ($\Sigma_{12}=0$) versus a correlated source ($\Sigma_{12}=0.3$).
In panel (a) (independent source), the encoder fully reveals $m_1$ but completely suppresses $m_2$, confirming that the encoder filters out directions with high mismatch.\\
In panel (b) (correlated source), the encoder projects onto the subspace spanned by the eigenvector of the transformed cost matrix $B$ corresponding to the negative eigenvalue. Due to the source correlation, this optimal signaling direction is rotated relative to the canonical axes (deviating from the first dimension). Consequently, the decoder infers information about $m_2$ through this rotated projection, resulting in the estimate $u_2$ being correlated with $m_1$ in a manner that optimally balances information revelation against the sensitivity mismatch.
\begin{figure}[t!]
    \centering
    \includegraphics[width=0.95\linewidth]{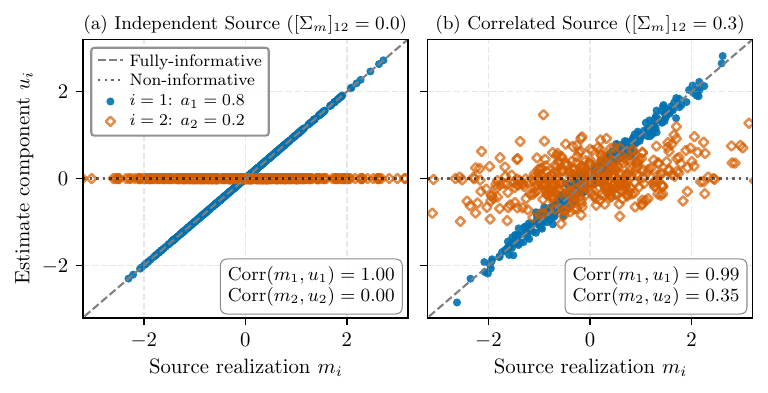}
    \caption{Illustration of equilibrium behavior for 2D cheap talk with $A=\mathrm{diag}(0.8, 0.2)$. (a) \textbf{Independent Source:} The problem decouples; Component 1 is revealed, while Component 2 is suppressed ($u_2=0$). (b) \textbf{Correlated Source:} The encoder still doesn't reveal $m_2$, but the decoder infers partial information about $m_2$ via its correlation with the revealed $m_1$, showing the interaction between mismatch geometry and source prior.}
    \label{fig:vector_scatter}
\end{figure}

\section{Conclusion and Perspectives}
\label{sec:conclusion}
This paper formulated and solved a Gaussian signaling game driven by a linear sensitivity mismatch between an encoder and a decoder. Under a Stackelberg commitment model, we derived explicit conditions determining when communication occurs. The analysis shows that a linear bias restricts information disclosure. The encoder only transmits data along specific eigenspaces of the mismatch matrix, and communication collapses if the conflict of interest crosses a critical threshold. Future research includes analyzing Nash equilibria without commitment, extending the noisy channel results to non-commutative settings, and embedding this model into dynamic control systems with an evolving sensitivity mismatch.

\bibliography{ifc}             
\end{document}